\pgfplotsset{
    compat=1.4,
    small,
    legend columns=2, 
    legend style={
        at={(0.99,0.99)},
        anchor=north east,
        font=\bfseries,
        /tikz/column 2/.style={
            column sep=5pt,
        },        
    },
    label style={font=\sffamily\small\bfseries}
}%
\theoremstyle{definition}
\newcommand{\etal}{\hbox{et al.}\xspace}
\newcommand{\eg}{\hbox{\emph{e.g.}}\xspace}
\newcommand{\ie}{\hbox{\emph{i.e.}}\xspace}
\newcommand{\st}{\hbox{\emph{s.t.}}\xspace}
\newcommand{\wrt}{\hbox{\emph{w.r.t.}}\xspace}
\newcommand{\etc}{\hbox{\emph{etc.}}\xspace}
\newcommand{\resp}{\hbox{\emph{resp.}}\xspace}
\newcommand{\ignore}[1]{\iffalse #1 \fi}
\newcommand{\tabincell}[2]{\begin{tabular}{@{}#1@{}}#2\end{tabular}}
\DeclareMathOperator*{\argmax}{argmax}
\newcommand\Label[1]{&\refstepcounter{equation}(\theequation)\ltx@label{#1}&}
\newenvironment{btHighlight}[1][]
{\begingroup\tikzset{bt@Highlight@par/.style={#1}}\begin{lrbox}{\@tempboxa}}
	{\end{lrbox}\bt@HL@box[bt@Highlight@par]{\@tempboxa}\endgroup}
\newcommand\btHL[1][]{%
	\begin{btHighlight}[#1]\bgroup\aftergroup\bt@HL@endenv%
	}
	\def\bt@HL@endenv{%
	\end{btHighlight}%
	\egroup
}
\newcommand{\bt@HL@box}[2][]{%
	\tikz[#1]{%
		\pgfpathrectangle{\pgfpoint{1pt}{0pt}}{\pgfpoint{\wd #2}{\ht #2}}%
		\pgfusepath{use as bounding box}%
		\node[anchor=base west, fill=orange!25,outer sep=.5pt,inner xsep=0.5pt, inner ysep=0.15pt, rounded corners=1pt, minimum height=\ht\strutbox-.1pt,#1]{\raisebox{.01pt}{\strut}\strut\usebox{#2}};
	}%
}
\definecolor{codegreen}{rgb}{0,0.6,0}
\definecolor{codegray}{rgb}{0.5,0.5,0.5}
\definecolor{codepurple}{rgb}{0.58,0,0.82}
\newcounter{lstannotation}
\lstdefinestyle{mystyle}{
      frame=single,
      framexleftmargin=0pt,
      commentstyle=\color{green},
      keywordstyle=\color{blue}\bfseries,
      numberstyle=\tiny\color{gray},
      stringstyle=\color{purple},
      basicstyle=\tiny\ttfamily\bfseries,
      breakatwhitespace=false,         
      breaklines=false,                 
      captionpos=b,                    
      keepspaces=true,     
      numbers=none,                    
      numbersep=4pt,                  
      showspaces=false,                
      showstringspaces=false,
      showtabs=false,                  
      tabsize=2,
      language=Java,
      escapechar=\%,
      moredelim=**[is][\btHL]{`}{`},
	  moredelim=**[is][{\btHL[fill=red!40]}]{@}{@},
   }
\definecolor{light-gray}{gray}{0.80}
\xpatchcmd{\proof}{.}{\proofpunctuation}{}{}
\xpatchcmd{\proof}{\itshape}{\prooffont}{}{}
\xpretocmd{\proof}{\setlength{\parindent}{0pt}}{}{}
\newcommand{\proofpunctuation}{.}
\newcommand{\prooffont}{\bfseries}
\begin{document}

\title{Learning Semantic Program Embeddings with Graph Interval Neural Network}

\author{Yu Wang}
\affiliation{
  \position{Position1}
  \department{Department of Computer Science and Technology}              
  \institution{Nanjing University}            
  \city{Nanjing}
  \state{Jiangsu}
  \postcode{210023}
  \country{China}                    
}
\email{yuwang_cs@smail.nju.edu.cn}          

\author{Fengjuan Gao}
\affiliation{
  \position{Position1}
  \department{Department of Computer Science and Technology}              
  \institution{Nanjing University}            
  \city{Nanjing}
  \state{Jiangsu}
  \postcode{210023}
  \country{China}                    
}
\email{fjgao@smail.nju.edu.cn}          

\author{Linzhang Wang}
\affiliation{
  \position{Position1}
  \department{Department of Computer Science and Technology}              
  \institution{Nanjing University}            
  \city{Nanjing}
  \state{Jiangsu}
  \postcode{210023}
  \country{China}                    
}
\email{lzwang@nju.edu.cn}          

\author{Ke Wang}
\affiliation{
  \institution{Visa Research.}            
  \city{Palo Alto}
  \state{CA}
  \country{USA}    
}
\email{kewang@visa.com}          

\begin{abstract}
Learning distributed representations of source code has been a challenging task for machine learning models. Earlier works treated programs as text so that natural language methods can be readily applied. Unfortunately, such approaches do not capitalize on the rich structural information possessed by source code. Of late, Graph Neural Network (GNN) was proposed to learn embeddings of programs from their graph representations. Due to the homogeneous (\ie do not take advantage of the program-specific graph characteristics) and expensive (\ie require heavy information exchange among nodes in the graph) message-passing procedure, GNN can suffer from precision issues, especially when dealing with programs rendered into large graphs. In this paper, we present a new graph neural architecture, called Graph Interval Neural Network (GINN), to tackle the weaknesses of the existing GNN. Unlike the standard GNN, GINN generalizes from a curated graph representation obtained through an abstraction method designed to aid models to learn. In particular, GINN focuses exclusively on intervals (generally manifested in looping construct) for mining the feature representation of a program, furthermore, GINN operates on a hierarchy of intervals for scaling the learning to large graphs.

We evaluate GINN for two popular downstream applications: variable misuse prediction and method name prediction. Results show in both cases GINN outperforms the state-of-the-art models by a comfortable margin. We have also created a neural bug detector based on GINN to catch null pointer deference bugs in Java code. While learning from the same 9,000 methods extracted from 64 projects, GINN-based bug detector significantly outperforms GNN-based bug detector on 13 unseen test projects.
Next, we deploy our trained GINN-based bug detector and Facebook Infer, arguably the state-of-the-art static analysis tool, to scan the codebase of 20 highly starred projects on GitHub. Through our manual inspection, we confirm 38 bugs out of 102 warnings raised by GINN-based bug detector compared to 34 bugs out of 129 warnings for Facebook Infer.
We have reported 38 bugs GINN caught to developers, among which 11 have been fixed and 12 have been confirmed (fix pending). GINN has shown to be a general, powerful deep neural network for learning precise, semantic program embeddings.
\end{abstract} 
\maketitle

\section{Introduction}
\label{sec:intro}

Learning distributed representations of source code has attracted growing interest and attention in program language research over the past several years. Inspired by the seminal work word2vec~\cite{Mikolov:2013}, many methods have been proposed to produce vectorial representation of programs. Such vectors, commonly known as program embeddings, capture the semantics of a program through their numerical components such that programs denoting similar semantics will be located in close proximity to one another in the vector space. A benefit of learning program embedding is to enable the application of neural technology to a board range of programming language (PL) tasks, which were exclusively approached by logic-based, symbolic methods before. 

\vspace*{3pt}
\noindent
\textbf{\textit{Existing Models of Source Code.}}\,
Earlier works predominately considered source code as text, and mechanically transferred natural language methods to discover their shallow, textual patterns~\cite{Hindle10,Pu2016,AAAI1714603}. Following approaches aim to learn semantic program embeddings from Abstract Syntax Tree (AST)~\cite{maddison2014structured,Alon:2019:CLD:3302515.3290353,alon2018code2seq}. Despite the significant progress, tree-based models are still confined to learning syntactic program features due to the limited offering from AST. Recently, another neural architecture, called Graph Neural Network (GNN), has been introduced to the programming domain~\cite{li2015gated,allamanis2017learning}. The idea is to reduce the problem of learning embeddings from source code to learning from their graph representations.
Powered by a general, effective message-passing procedure~\cite{Gilmer10},
GNN has achieved state-of-the-art results in a variety of problem domains: variable misuse prediction~\cite{allamanis2017learning}, program summarization~\cite{fernandes2018structured}, and bug detection and fixing~\cite{Dinella2020HOPPITY}. Even though GNN has substantially improved the prior works, their precision and efficiency issues remain to be dealt with. First, GNN is indiscriminate in which types of graph data they deal with. In other words, once programs are converted into graphs, they will be processed in the same manner as any other graph data (\eg social networks, molecular structures). As a result, it misses out on the opportunity to capitalize on the unique graph characteristics possessed by programs. Second, perhaps more severely, GNN has difficulties in learning from large graphs due to the high cost of its underlying message-passing procedure. Ideally, every node should pass messages directly or indirectly to every other node in the graph to allow sufficient information exchange. However, such an expensive propagation is hard to scale to large graphs without incurring a significant precision loss.

\vspace*{3pt}
\noindent
\textbf{\textit{Graph Interval Neural Network.}}\,
In this paper, we present a novel, general neural architecture called Graph Interval Neural Network (GINN) for learning semantic embeddings of source code. The design of GINN is based on a key insight that by learning from abstractions of programs, models can focus on code constructs of greater importance, and ultimately capture the essence of program semantics in a precise manner. At the technical level, we adopt GNN's learning framework 
for their cutting-edge performance. An important challenge we need to overcome is how to abstract programs into more efficient graph representations for GINN to learn. To this end,
we develop a principled abstraction method directly applied on control flow graphs. In particular, we derive from the control flow graph a hierarchy of intervals --- subgraphs that generally represent looping constructs --- as a new form of program graphs. Under this abstracted graph representation, GINN not only focuses exclusively on intervals for extracting deep, semantic program features but also scales its generalization to large graphs efficiently. Note that GINN's scalability-enhancing approach is fundamentally different from the existing techniques~\cite{allamanis2017learning}, where extra edges are required to link nodes that are far apart in a graph. Moreover,~\citet{allamanis2017learning} have not quantified the improvement of model scalability due to the added edges, let alone identify which edges to add \wrt the different graphs and downstream tasks.



We have realized GINN as a new model of source code and extensively evaluated it. To demonstrate its generality, we evaluate GINN in multiple downstream tasks. First, we pick two prediction problems defined by the prior works~\cite{allamanis2017learning,Alon:2019:CLD:3302515.3290353} in which  
GINN significantly outperforms RNN Sandwich and sequence GNN, the state-of-the-art models
in predicting variable misuses and method names respectively.
Second, we evaluate GINN in detecting null pointer dereference bugs in Java code, a task that is barely attempted by learning-based approaches.
Results show GINN-based bug detector is significantly more effective than the baseline built upon GNN. We also deploy our trained bug detector and Facebook Infer~\cite{calcagno2015moving,berdine2005smallfoot}, arguably the state-of-the-art static analysis tool, to search bugs in 20 Java projects that are among the most starred on GitHub. After manually inspecting the alarms raised by both tools, we find that our bug detector yields a higher precision (38 bugs out of 102 alarms) than Infer (34 bugs out of 129 alarms). We have reported the 38 bugs detected by GINN to developers, out of which 11 are fixed and 12 are confirmed (fix pending). Our evaluation suggests
GINN is a general, powerful deep neural network for learning precise, semantic program embeddings. 

\vspace*{3pt}
\noindent
\textbf{\textit{Contributions.}}\, Our main contributions are:
\vspace*{-2pt}
\begin{itemize}
    \item We propose a novel, general deep neural architecture, Graph Interval Neural Network,
    which generalizes from a curated graph representation, obtained through a graph abstraction method on the control-flow graph.
    
    \item We realize GINN as a new deep model of code, which readily serves multiple downstream PL tasks. More importantly, GINN outperforms the state-of-the-art model by a comfortable margin in each downstream task.
    
    \item We present the details of our extensive evaluation of GINN in variable misuse prediction, method name prediction, and null pointer dereference detection. 
    
    \item We publish our code and data at \url{https://figshare.com/articles/datasets_tar_gz/8796677} to aid future research activity.
\end{itemize}

\section{Preliminary}
\label{sec:pre}

In this section, we briefly revisit connected, directed graphs, interval~\cite{Allen1970} and GNN~\cite{gori2005new}, which lay the foundation for our work.

\subsection{Graph}
\label{subsection:gra}
A graph $G\! =\! (V, E)$ consists of a set of nodes $V \!= \!\{v_1, ..., v_n\}$, and a list of directed edge sets $E = (E_1, . . . , E_K)$ where $K$ is the total number of edge types and $E_k$ ($1 \leq k \leq K$) is a set of edges of type $k$. We denote by $(v_s, v_d, k) \in E_k$ an edge of type $k$ directed from node $v_s$ to node $v_d$. For graphs with only one edge type, an edge is represented as $(v_s, v_d)$.

The immediate successors of a node $v_i$, denoted by $\textit{post}\,(v_i)$, are all of the nodes $v_j$ for which $(v_i, v_j)$ is an edge in one edge set in $E$. The immediate predecessors of node $v_j$, denoted by $\textit{pre}\,(v_j)$, are all of the nodes $v_i$ for which $(v_i, v_j)$ is an edge in one edge set in $E$. 

A path is an ordered sequence of nodes $(v_p,..., v_q)$ and their connecting edges, in which each node $v_t$, $t\in(p,\dots,q-1)$, is an immediate predecessor of $v_{t+1}$. A closed path is a path in which the first and last nodes are the same. The successors of a node $v_t$, denoted by $\textit{post}\,^{*}(v_t)$, are all of the nodes $v_x$ for which there exists a path from $v_t$ to $v_x$. The predecessors of a node $v_t$, denoted by $\textit{pre}\,^{*}(v_t)$, are all of the nodes $v_y$ for which there exists a path from $v_y$ to $v_t$.

\subsection{Interval}
Introduced by~\citet{Allen1970}, an interval \textit{I(h)} is the maximal,
single entry subgraph in which \textit{h} is the only entry node
and all closed paths contain \textit{h}. The unique
interval node \textit{h} is called the interval head or simply the
header node. An interval can be expressed in terms of
the nodes in it: $I(h) = \{v_l, v_2, ... ,v_m\}.$ 

By selecting the proper set of header nodes, a graph can be partitioned into a set of disjoint intervals. We show the partition procedure proposed by~\citet{Allen1970} in Algorithm \ref{alg:intervals}. The key is to add to an interval a node only if all of its immediate predecessors are already in the interval (Line \ref{li:w2} to \ref{li:s2}). The intuition is such nodes when added to an interval keep the original header node as the single entry of an interval. To find a header node to form another interval, a node is picked that is not a member of any existing intervals although it must have a (but not all) immediate predecessor being a member of the interval that is just computed
(Line \ref{li:w3} to \ref{li:s3}). We repeat the process until reaching the fixed-point where all nodes are members of an interval. 

The intervals on the original graph 
are called the first order intervals, denoted by 
$I^1(h)$, and the graph from which they were derived 
is called first order graph $G^1$. Partitioning the first 
order graph results in a set of first order intervals, denoted by $\mathcal{S}^1$ \st 
$I^1(h) \in \mathcal{S}^1$. By making each first order 
interval into a node and each interval exit edge into 
an edge, the second order graph can be derived, from 
which the second order intervals can also be defined. 
The procedure can be repeated to derive successively 
higher order graphs until the n-th order graph consists 
of a single interval.
Figure~\ref{fig:interval} illustrates 
such a sequence of derived graphs. 

\begin{minipage}{.99\linewidth}
\hspace{-12.5pt}
\begin{algorithm}[H]
\small
\caption{Finding intervals for a given graph} 
\label{alg:intervals}
\raggedright
\KwIn {a set of nodes $V$ on a graph}
\KwOut {a set of intervals $\mathcal{S}$}
// Assume $v_0$ is the unique entry node of the graph

H = \{$v_0$\}\;
\While{$H \neq \emptyset$}
{
        // remove next $h$ from $H$
        
        h = H.pop()\;
        I(h) = \{h\}\;
        // only nodes that are neither in the current interval nor any other interval will be considered

        \While{$\{v \in V \,| \, v \notin I(h) \wedge \nexists s (s \in \mathcal{S} \wedge v \in s) \wedge  pre(v) \subseteq I(h)\} \neq \emptyset$\label{li:w2}}   
        {
           I(h) = I(h) $\cup$ \{ v \}\; \label{li:s2}
        }
        // find next headers
        
        \While{$\{v \in V  \,| \, \nexists s_{1} (s_{1} \in \mathcal{S} \wedge v \in s_{1})  \wedge \exists m_{1}, m_{2} (m_{1},m_{2} \in pre(v) \wedge m_{1} \in I(h) \wedge m_{2} \notin I(h)) \}\! \neq\! \emptyset$\label{li:w3}}
        {
            H = H $\cup$ \{ v \}\; \label{li:s3}
        }
        $\mathcal{S}$ = $\mathcal{S}$ $\cup$ I(h)\;
}
\end{algorithm}
\end{minipage}
\begin{figure}[h]
\vspace*{-17pt}
\centering
  \hspace{-4pt}\includegraphics[width=.992\linewidth]{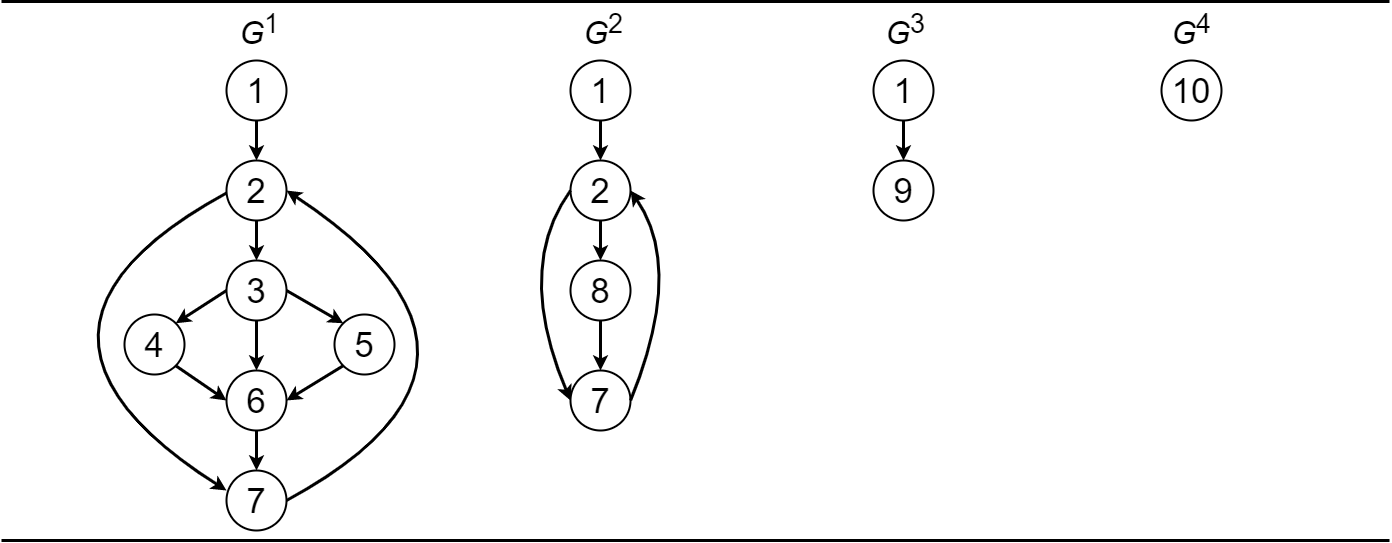}
  \caption{n-th order intervals and graphs. The set of intervals on $\mathcal{S}^1$ are $I^1(1)$=\{1\}, $I^1(2)$=\{2\}, $I^1(3)$=\{3,4,5,6\}, $I^1(7)$=\{7\}. 
  The set of intervals on $\mathcal{S}^2$ are $I^2(1)$=\{1\} and $I^2(2)$=\{2,7,8\}. $I^3(1)$=\{1,9\} and $I^4(10)$=\{10\} are the only intervals on $\mathcal{S}^3$ and $\mathcal{S}^4$ respectively.}
  \label{fig:interval}
\end{figure}
\vspace*{-20pt}

\subsection{Graph Neural Network}
\label{subsec:gnn}
Graph Neural Network (GNN)~\cite{gori2005new} is a specialized machine learning model designed to learn from graph data. The intuitive idea underlying GNN is that nodes in a graph represent objects and edges represent their relationships. Thus, each node $v$ can be attached to a vector, called state, which collects a representation of the object denoted by $v$. Naturally, the state of $v$ can be specified using the information provided by nodes in the neighborhood of $v$. Technically, there are many realizations of this idea. Here, we describe the message-passing technique~\cite{Gilmer10}, a widely applied method for GNN to compute the states of nodes.

We reuse the terminology of graph introduced in Section~\ref{subsection:gra}. Given a graph ${G}\! =\! ({V}, {E})$ where ${V}$ and ${E}$ are the inputs, a neural message-passing GNN computes the state vector for each node in a sequence of steps. In each step, every node first sends messages to all of its neighbors, and then update its state vector by aggregating the information received from its neighbors.
\begin{equation}
\mu^{(l+1)}_{v} = \phi(\{\mu^{(l)}_{u}\}_{u \in \mathcal{N}(v)})  \label{equ:h} %
\end{equation}
$\mu^{(l+1)}_{v}$ denotes the state of $v$ in $l+1$ steps, which is determined by the state of its neighbors in the previous step. $\mathcal{N}{(v)}$ denotes the neighbors that are connected to $v$. Formally, $\mathcal{N}{(v)} = \{u|(u,v,k) \in {E}_k, \forall k \in \{1,2,...,K\}\}$. $\phi(\cdot)$ is a non-linear function that performs the aggregation. After the state of each node is computed with many rounds of message passing, a graph representation can also be obtained through various pooling operations (\eg average or sum). 

Some GNN~\cite{Si10} computes a separate node state \wrt an edge type (\ie $\mu^{(l+1), k}_{v}$ in Equation~\ref{equ:type}) before aggregating them into a final state (\ie $\mu^{(l+1)}_{v}$ in Equation~\ref{equ:sum}).
\begin{align*}
\mu^{(l+1), k}_{v} &= \phi_1(\sum_{\mathclap{u \in \mathcal{N}^{k}{(v)}}} \mathbf{W_1} \mu_u^{(l)}), \forall k \in \{1,2,...,K\} \Label{equ:type} &\;\;\;\;\;\;\,
\mu^{(l+1)}_{v} &= \phi_2(\mathbf{W_2}[ \mu_v^{(l), 1}, \mu_v^{(l), 2}, ..., \mu_v^{(l), K}]) \Label{equ:sum}
\end{align*}
$\mathbf{W_1}$ and $\mathbf{W_2}$ are variables to be learned, and $\phi_1$ and $\phi_2$ are some nonlinear activation functions.

\citet{li2015gated} proposed Gated Graph Neural Network (GGNN) as a new variant of GNN. Their major contribution is a new instantiation of $\phi(\cdot)$ using Gated Recurrent Units~\cite{cho-cho2014learning}. The following equations describe how GGNN works:
\begin{align*}
\qquad\qquad\,\,
{m}_v^l &= \sum_{\mathclap{u \in \mathcal{N}{(v)}}} f(\mu_{u}^{(l)}) \qquad\qquad\Label{equ:mes1} & 
\qquad\qquad\qquad
\mu^{(l+1)}_{v} &= \mathit{GRU}({m}_v^l, \mu^{(l)}_{v})\qquad\qquad\Label{equ:mes2}
\end{align*}

To update the state of node $v$, Equation~\ref{equ:mes1} computes a message ${m}_v^l$ using $f(\cdot)$ (\eg a linear function) from the states of its neighboring nodes $\mathcal{N}{(v)}$. Next, a $GRU$ takes ${m}_v^l$ and $\mu^{(l)}_{v}$ --- the current state of node $v$ --- to compute the new state $\mu^{(l+1)}_{v}$ (Equation~\ref{equ:mes2}). 

\vspace*{3pt}
\noindent
\textbf{\textit{GNN's Weakness.}}\,
Although the message-passing technique has highly empowered GNN, it severely hinders GNN's capability of learning patterns formed by nodes that are far apart in a graph. From a communication standpoint, 
a message 
sent out by the start node needs to go through many intermediate nodes en route to the end node. Due to the aggregation operation (regardless of which implementation in Equation~\ref{equ:h},~\ref{equ:sum}, or~\ref{equ:mes2} is adopted), the message gets diluted whenever it is absorbed by an intermediate node for updating its own state. By the time the message reaches the end node, it is too imprecise to bear any reasonable resemblance to the start node. This problem is particularly challenging in the programming domain where long-distance dependencies are common, important properties models have to capture.~\citet{allamanis2017learning} designed additional edges to connect nodes that exhibit a relationship of interest (\eg data or control dependency) but are otherwise far away from each other in an AST. The drawbacks are: first, their idea, albeit conceptually appealing, has not been rigorously tested in practice, therefore, the extent to which it addresses the scalability issue is unknown; second, they have not offered a principled solution regarding which edges to add in order to achieve the best scalability. Instead, they universally added a predefined set of edges to any graph, an approach that is far from satisfactory.
\section{Methodology and Approach}
\label{sec:met}

\begin{figure*}[t]
	\begin{minipage}{0.21\textwidth}
	    \includegraphics[width=\textwidth]{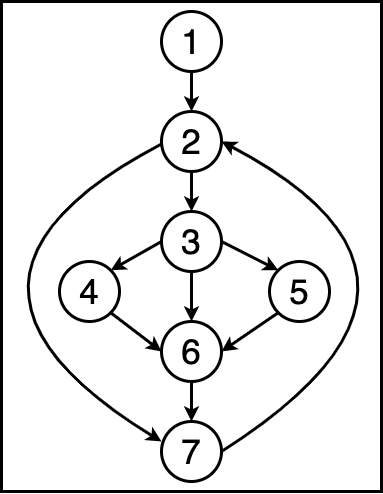}
        \caption{}
		\label{fig:1cfg}
	\end{minipage}
	\begin{minipage}{0.21\textwidth}	
	    \includegraphics[width=\textwidth]{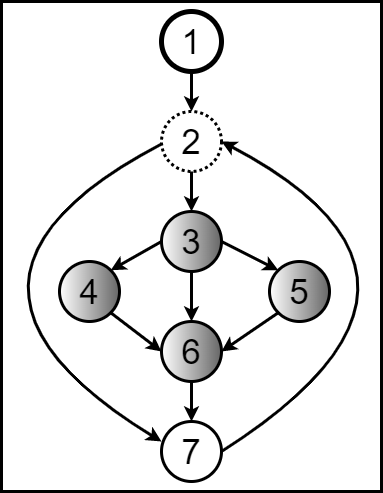}
        \caption{}
		\label{fig:1cfgs}
	\end{minipage}
    \begin{minipage}{0.1187\textwidth}
	    \includegraphics[width=\textwidth]{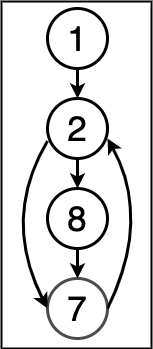}
        \caption{}
		\label{fig:2cfg}
	\end{minipage}
	\begin{minipage}{0.1187\textwidth}	
	    \includegraphics[width=\textwidth]{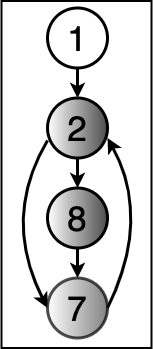}
        \caption{}
		\label{fig:2cfgs}
	\end{minipage}
	\begin{minipage}{0.1187\textwidth}	
	    \includegraphics[width=\textwidth]{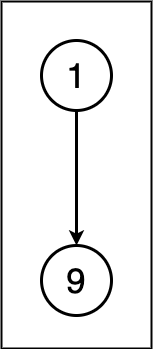}
        \caption{}
		\label{fig:3cfg}
	\end{minipage}	
\end{figure*}


Our intuition is that generalization can be made easier
when models are given abstractions (of programs) to learn.
Considering its generality and capacity, we choose GNN's learning framework to incorporate our abstraction methodology.
A key challenge arises: 
how to abstract programs into favorable graph representations for GNN to learn.  
A natural idea would be abstracting programs at the level of source code and then reducing the abstracted programs into graphs. On the other hand, a simpler and perhaps more elegant approach is to abstract programs directly on their graph representation, which this paper explores. Specifically, we propose a new graph model, called Graph Interval Neural Network (GINN), which uses control flow graph to represent an input program,\footnote{We discuss a few variations in Section~\ref{sec:eva}.} and abstracts it with three primitive operators: partitioning, heightening, and lowering. Each primitive operator is associated with an extra operator for computing the state of each node on the resultant graph after the primitive operator is applied. Using the control flow graph depicted in Figure~\ref{fig:1cfg} as our running example, we explain in details of GINN's learning approach. 


To start with, GINN uses the partitioning operator to split the graph into a set of intervals. Figure~\ref{fig:1cfgs} adopts different styling of the nodes to indicate the four intervals on the graph. Then, it proceeds to compute the node states using the extra operator. Unlike the standard message-passing mechanism underpinning the existing GNN, GINN restricts messages to be passed within an interval. If an interval is a node itself, its state will not get updated by the extra operator.
For example, node 3, 4, 5, and 6 in Figure~\ref{fig:1cfgs} will freely pass messages to the neighbors while evolving their own states. On the other hand, node 1 and 2, which are alone in their respective intervals, keep their states as they were.
The propagation that occurs within an interval is carried out in the same manner as it is in the existing GNN (Equation~\ref{equ:h},~\ref{equ:sum}, or~\ref{equ:mes2} depending on the actual implementation). We formalize the partitioning operator as follows:

\begin{definition}(\textit{Partitioning})
    Partitioning operator, denoted by $\rho$, is a sequence ($\rho_{\mathit{abs}}$, $\rho_{\mathit{com}}$) where $\rho_{\mathit{abs}} : {G}^n \rightarrow  \mathcal{S}^n$ is a function that maps an n-th order graph into a set of n-th order intervals, and
    $\rho_{\mathit{com}} : \forall I^{n}(h) \in \mathcal{S}^{n} \rightarrow \mu_{v_{1}}, \mu_{v_{2}},\dots \mu_{v_{T}} \! \in \mathbb{R}^{e} (v_{1},v_{2},\dots,v_{T} \in I^{n}(h))$ is a function that computes a $e$-dimensional vector for each node in each n-th order interval. Equation~\ref{equ:part} defines $\rho_{\mathit{com}}$.
\end{definition}
Depends on how many nodes $I^{n}(h)$ is composed of, two cases to be considered:
\begin{equation}\label{equ:part}
    \begin{cases} 
    \mu^{(l+1)}_{v} = \phi(\{\mu^{(l)}_{u}\}_{u \in \mathcal{N}(v)}) & v \in I^{n}(h) \wedge u \in I^{n}(h) \\
    \mu^{(l+1)}_{v} = \mu^{(l)}_{v} & v \in I^{n}(h) \wedge \lvert I^{n}(h) \rvert = 1   
    \end{cases}
\end{equation}

Next, GINN applies the heightening operator to convert the first order graph into the second order graph (Figure~\ref{fig:2cfg}). In particular, the heightening operation replaces each active interval (of multiple nodes) on the lower order graph with a single node on the higher order graph (\eg node 3, 4, 5, and 6 are replaced with node 8). In terms of the states of the freshly created nodes, GINN initializes them to be the weighted sum of the states of replaced nodes. To motivate this design decision, we also consider simply averaging the states of replaced nodes in Section~\ref{subsubsec:res} and~\ref{subsubsec:metres}.
The heightening operator is defined below.

\begin{definition}(\textit{Heightening})
    Heightening operator, denoted by $\eta$, is a sequence ($\eta_{\mathit{abs}}$, $\eta_{\mathit{com}}$) where $\eta_{\mathit{abs}} : {G}^n \rightarrow  {G}^{n+1}$ is a function that maps an n-th order graph into an (n+1)-th order graph, and $\eta_{\mathit{com}} : \mu_{v^{n}_{1}}, \mu_{v^{n}_{2}},\dots \mu_{v^{n}_{N}} \in \mathbb{R}^{e} \rightarrow \mu_{v^{n+1}_{1}}, \mu_{v^{n+1}_{2}},\dots \mu_{v^{n+1}_{N^\prime}} \! \in \mathbb{R}^{e} (\{v^{n}_{1},v^{n}_{2},\dots v^{n}_{N}\}$ denotes the set of nodes on $G^{n}$; $\{v^{n+1}_{1},v^{n+1}_{2},\dots v^{n+1}_{N^\prime}\}$ denotes the set of nodes on $G^{n+1})$ is a function that maps a set of $e$-dimensional vectors for each node on ${G}^n$ to another set of $e$-dimensional vectors for each node on ${G}^{n+1}$. Equation~\ref{equ:merge} defines $\eta_{\mathit{com}}$. 
\end{definition}
Like Equation~\ref{equ:part}, we consider two cases: 
\begin{align*}
\quad\,\,\,
    \begin{cases} 
    \mu^{(0)}_{v_{i}^{n+1}} = \mu^{(l)}_{v_{i}^{n}} & \lvert I^{n}(h) \rvert = 1 \wedge v_{i}^{n} \in I^{n}(h) \\
      \mu^{(0)}_{v_{h}^{n+1}} = \sum_{v \in I^{n}(h)} \alpha_{v}\mu^{(l)}_{v} & \lvert I^{n}(h) \rvert > 1   
    \end{cases}\quad\! \Label{equ:merge}& \qquad\qquad
    \alpha_{v} = \frac{exp\,(\lvert\lvert\mu^{(l)}_{v}\rvert\rvert)}{\sum\limits_{\mathclap{v^\prime \in I^{n}(h)}} exp\,(\lvert\lvert\mu^{(l)}_{v^\prime}\rvert\rvert)} \Label{equ:weight}
\end{align*}
given that $v_{i}^{n+1}$ on the (n+1)-th order graph is the same node as $v_{i}^{n}$ on the n-th order graph, and $v_{h}^{n+1}$ is the node that replaces the interval $I^{n}(h)$. The weight of each state vector, $\alpha_{v}$, is determined based on its length
(Equation~\ref{equ:weight}). 

Now GINN uses the partitioning operator again to produce a set of intervals from the second order graph (Figure~\ref{fig:2cfgs}). Due to the restriction outlined earlier, messages are only passed among node 2, 7, and 8. However, since node 8 can be deemed as a proxy of node 3\textasciitilde6, this exchange, in essence, covers all but node 1 on the first order graph. In other words, even though messages are always exchanged within an interval, heightening abstraction expands the region an interval covers, thus enabling the communication of nodes that were otherwise far apart. 

Then, GINN applies the heightening operator the second time, as a result, the third order graph emerges (Figure~\ref{fig:3cfg}).
Since the third order graph is an interval itself, 
GINN computes the state of node 1 and 9 in the same way as existing GNN using partitioning operator. Now, GINN reaches an important point, which we call \textit{sufficient propagation}. That is every node has passed messages either directly or indirectly to every other reachable node in the original control flow graph. In other words, the heightening operator is no longer needed. 

Upon completion of the message-exchange between node 1 and 9, GINN applies for the first time the lowering operator to convert the third order graph back to the second order graph. In particular, nodes that were created on the higher order graphs for replacing an interval on the lower order graph will be split back into the nodes of which the interval previously consists (\eg node 9 to node 2, 7, and 8). The idea is, after reaching sufficient propagation, GINN begins to recover the state of each node on the original control flow graph. We define the lowering operator below.

\begin{definition}(\textit{Lowering})
    Lowering operator, denoted by $\lambda$, is a sequence ($\lambda_{\mathit{abs}}$, $\lambda_{\mathit{com}}$) where $\lambda_{\mathit{abs}} : {G}^{n+1} \rightarrow  {G}^{n}$ is the inverse function of $\eta_{\mathit{abs}}$, which maps an (n+1)-th order graph into an n-th order graph, and $\lambda_{\mathit{com}} : \mu_{v^{n+1}_{1}}, \mu_{v^{n+1}_{2}},\dots \mu_{v^{n+1}_{N^\prime}} \in
    \mathbb{R}^{e} \rightarrow \mu_{v^{n}_{1}}, \mu_{v^{n}_{2}},\dots \mu_{v^{n}_{N}} \! \in \mathbb{R}^{e} (\{v^{n+1}_{1},v^{n+1}_{2},\dots v^{n+1}_{N^\prime}\}$ denotes the set of nodes on $G^{n+1}$; $\{v^{n}_{1},v^{n}_{2},\dots v^{n}_{N}\}$ denotes the set of nodes on $G^{n})$ is a function that maps a set of $e$-dimensional vectors for each node on ${G}^{n+1}$ to another set of $e$-dimensional vectors for each node on ${G}^{n}$. Equation~\ref{equ:split} defines $\lambda_{\mathit{com}}$.
\end{definition}
In general, $\lambda_{\mathit{com}}$ can be considered as a reversing operation of $\eta_{\mathit{com}}$:

\begin{equation}\label{equ:split}
    \begin{cases} 
    \mu^{(0)}_{v_{i}^{n}} = \mu^{(l)}_{v_{i}^{n+1}} & \lvert I^{n+1}(h) \rvert = 1 \wedge v_{i}^{n+1}\! \in\! I^{n+1}(h) \\
    \mu^{(0)}_{v} = \alpha_{v}\mu^{(l)}_{v_{h}^{n+1}} * \lvert I^{n+1}(h) \rvert & \lvert I^{n+1}(h) \rvert > 1   
    \end{cases}
\end{equation}
given that $v_{i}^{n}$ on the n-th order graph is the same node as $v_{i}^{n+1}$ on the (n+1)-th
order graph, and $v_{h}^{n+1}$ is the node that is split into a multitude of nodes, each of which is assigned $\alpha_{v}$, the weight defined in Equation~\ref{equ:weight}. $\lvert I^{n+1}(h) \rvert$ is the regulation term that aims to bring node $v$'s initial state and its final state (prior to being replaced by the heightening operator) within the same order of magnitude.


As GINN revisits the second order graph, it re-applies the partitioning operator to produce the only interval to be processed (\ie node 2, 7, and 8). However, there lies a crucial distinction in computing the states of node 2, 7, and 8 this time around. That is the communication among the three nodes is no longer local to their own states, but also inclusive of node 1's thanks to the exchange between node 1 and 9 on the third order graph. In general, such recurring interval propagation improves the precision of node representations by incorporating knowledge gained about the entire graph.

Next, GINN applies the lowering operator again to recover the original control flow graph. After the partitioning operation, GINN computes the state vector for every node on the graph, signaling the completion of a whole learning cycle. The process will be repeated for a fixed number of steps. Then, we use the state vectors from the last step as the node representations. We generalize the above explanation with the running example to formalize GINN's abstraction cycle.

\begin{definition}(\textit{Abstraction Cycle})
	GINN's abstraction cycle is a sequence composed of three 
	primitive operators: partitioning, denoted by $\rho$; heightening, 
	denoted by $\eta$; and lowering, denoted by $\lambda$. The sequence 
	can be written in the form of 
	$((\rho \rightarrow \eta)^{*} \rightarrow \rho^{\prime} \rightarrow (\lambda \rightarrow \rho)^{*})^{*}$ where 
	$\lvert(\rho \rightarrow \eta)^{*}\rvert=\lvert(\lambda \rightarrow \rho)^{*}\rvert$.  
    $\rho^{\prime}$ denotes the partitioning operator on the highest order graph in which 
    case the graph is a single interval itself.
\end{definition}

To sum up, GINN adopts control flow graphs as the program representation. At first, it uses heightening operator to increase the order of graphs, as such, involving more nodes to communicate within an interval. Later, by applying the lowering operator, GINN restores local propagation on the reduced order of graphs, and eventually recovers the node states on the original control flow graph. Worth mentioning heightening and lowering operators are complementary and their cooperation benefits the overall GINN's learning objective. Specifically, heightening operator enables GINN to capture the global properties of a graph, with which lowering operator helps GINN to compute precise node representations.
As the interleaving between heightening and lowering operator goes on, node representations will continue to be refined to reflect the properties of a graph.

\vspace*{3pt}
\noindent
\textbf{\textit{Abstracting the Source Code.}}\, We aim to demystify what the above graph abstraction method translates to at the source code level, and why it helps GINN to learn. Since intervals play a pivotal role in all three operators of the graph abstraction method,
it is imperative that we understand what an interval represents in a program. As a simple yet effective measure, we use a program, which has the same control flow graph as the one in Figure~\ref{fig:1cfg}, to disclose what code construct an interval correlates to, and more importantly, how the abstraction method works at the source code level. The program is depicted in Figure~\ref{fig:tran}, which computes the starting indices of the substring \texttt{s1} in string \texttt{s2}. 
Recall the only interval GINN processed on the first order graph, the one that consists of node 3\textasciitilde6, which represents the inner \texttt{do-while} loop in the function. In other words, GINN focuses exclusively on the inner loop when learning from the first order graph. Similarly, node 2, node 7, and node 8 consist of the only interval from which GINN learns at the second order graph. This interval happens to be the nested \texttt{do-while} loop. Finally, GINN spans across the whole program as node 1 and node 9 cover every node in the first order graph. Figure~\ref{fig:tran} illustrates GINN's transition among the first, second, and third order graph, in particular, the distinct focus on each order graph. Inspired from the exposition above, we present Theorem~\ref{the:ci} and Corollary~\ref{cor:hi}.

\lstset{style=mystyle}
\begin{figure*}
        \centering
        \begin{tikzpicture}
        \matrix[matrix of nodes, 
        nodes={anchor=center},column sep=-1.5em]{%
            \begin{minipage}[t]{0.28\textwidth}
            	\lstinputlisting[linewidth=3.84cm,xleftmargin=-3em,framexleftmargin=-0.3em]{code/1st.java}
            \end{minipage}
            &
            \hspace{1em}
            \includegraphics[width=0.03\textwidth]{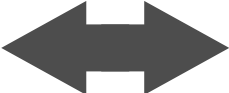}
            &
            \hspace{.8em}
            \begin{minipage}[t]{0.28\textwidth}
            	\lstinputlisting[linewidth=4.5cm,framexleftmargin=-0.3em]{code/2nd.java}
            \end{minipage}
            &
            \hspace{1em}
            \includegraphics[width=0.03\textwidth]{pics/arrowd.png}
            &
            \hspace{.8em}
            \begin{minipage}[t]{0.28\textwidth}
            	\lstinputlisting[linewidth=4.5cm,framexleftmargin=-0.3em]{code/3rd.java}
            \end{minipage}
            \\
        };
        \end{tikzpicture}
        \setlength{\abovecaptionskip}{-8pt}
        \setlength{\belowcaptionskip}{-4pt}
        \caption{GINN's focus in source code when transitioning back and forth between different order graphs.\label{fig:tran}}
\end{figure*}

\begin{theorem}[Interval Constituents]
\label{the:ci}
Given a loop structure $l$ on a control flow graph, $l$'s loop header, denoted by $h$, is the entry node of an interval that does not contain any predecessor of $h$ outside of $l$. However, the interval contains every node inside of $l$ assuming $l$ does not contain an inner loop construct.
\end{theorem}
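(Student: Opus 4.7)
The plan is to reduce all three assertions of the theorem --- that $h$ is an interval header, that $I(h)$ excludes every predecessor of $h$ lying outside $l$, and that $I(h)$ contains every node lying inside $l$ --- to a single structural invariant of Algorithm~\ref{alg:intervals}: every $v \in I(h) \setminus \{h\}$ is reachable from $h$ by a directed path lying entirely inside $I(h)$. I would establish this invariant first, by induction on the iterations of the inner loop at Line~\ref{li:w2}; the admission rule $\textit{pre}(v) \subseteq I(h)$ supplies an in-edge into $v$ from some $u \in I(h)$ that is already reachable from $h$ by the induction hypothesis, and concatenating this edge to the inductive path yields the required witness.

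With the invariant in hand, the header-selection piece is the easiest. If $h = v_0$ then $h$ sits on $H$ at initialization; otherwise a loop header possesses at least one external entry predecessor $p \in \textit{pre}(h) \setminus l$ together with at least one back-edge predecessor inside $l$, so when Algorithm~\ref{alg:intervals} first completes an interval containing $p$, the predicate on Line~\ref{li:w3} fires on $h$ and Line~\ref{li:s3} adds $h$ to $H$. Exclusion of external predecessors is equally short: if some $p \in \textit{pre}(h) \setminus l$ were in $I(h)$, the invariant would supply a directed path $h \leadsto p$ inside $I(h)$ which, concatenated with the edge $p \to h$, would yield a closed path through $p$; by any standard notion of a loop (natural loop, maximal SCC containing $h$, etc.) this forces $p \in l$, contradicting $p \notin l$.

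For the inclusion claim, I would use the no-inner-loop hypothesis to observe that $h$ dominates every node of $l$ and that the only back-edges targeting $l$ go to $h$. Deleting those back-edges from the subgraph induced on $l$ therefore produces a DAG rooted at $h$ in which every $v \in l \setminus \{h\}$ has all of its CFG-predecessors inside $l$ (otherwise $l$ would admit a second entry, contradicting dominance by $h$). A straightforward induction on the topological order of this DAG then shows that each $v \in l$ is eventually admitted into $I(h)$ at Line~\ref{li:s2}, with the base case $h \in I(h)$ being immediate. The step I expect to be the main obstacle is verifying that none of the predecessors encountered during this induction has been pre-empted into some other interval $s$ completed before $I(h)$; I would rule this out by combining the reachability invariant with the dominance property, noting that any node of $l$ sitting in an earlier-processed interval $s$ would have to be reachable from $s$'s header through a path avoiding $h$, contradicting $h$'s dominance over all of $l$.
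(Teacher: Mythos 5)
Your proposal is correct in outline but takes a genuinely different route from the paper. The paper argues declaratively: each of the three claims is proved by contradiction directly against the defining properties of an interval (a single entry node; every closed path passes through the header), with the algorithm's admission rule $\textit{pre}(v)\subseteq I(h)$ invoked only where needed. You instead argue operationally about Algorithm~\ref{alg:intervals}, routing everything through an explicit reachability invariant (every non-header member of $I(h)$ is reachable from $h$ inside $I(h)$), dominance of $l$ by $h$, and a topological induction on the loop body with back edges deleted. Your invariant buys real clarity: it turns the exclusion-of-external-predecessors step into an exhibited closed path through $p$ and $h$, where the paper merely asserts a violation of the single-entry rule; and you explicitly confront the pre-emption issue --- a node of $l$ being claimed by an earlier-completed interval --- which the paper's third part silently elides (its recursion assumes the only reason a node fails to enter $I(h)$ is a missing predecessor, ignoring the $\nexists s$ clause in Line~\ref{li:w2}). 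The one soft spot in your plan is the header-selection step: showing that Line~\ref{li:w3} fires on $h$ requires knowing both that $h$ has not already been absorbed into the just-completed interval and that its back-edge predecessor $m$ lies outside it; ruling out $m\in I$ needs exactly the recursive argument the paper spells out in its first part ($m\in I$ forces all of $l$ into $I$, producing a closed path avoiding $I$'s header), so you should import that case split rather than treat the firing of the predicate as automatic. With that repair, your argument goes through and is, if anything, more careful than the paper's on the inclusion claim.
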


\begin{proof}
First, we prove the non-coexistence of $h$ and its predecessors outside of $l$. Assume otherwise, there is an interval $I$ containing at least one predecessor of $h$ outside of $l$ and $h$ itself. By construction, $I$ already had an entry node $h^\prime$. Due to the loop-forming back edge connecting node $m$ back to $h$ within $l$, $I$ now has two entry nodes, $h^\prime$ and $h$, if $m$ is not already in $I$, which violates the single-entry rule of an interval. For the other possibility that if $m$ is in $I$, then by construction all immediate predecessors of $m$ are also in $I$ given that $l$ does not contain an inner loop, recursively all nodes in $l$ are in $I$, meaning there will be closed paths within $I$ that do not go through $h^\prime$ (\ie those going through $h$), which violates the definition of interval.

Second, we prove $h$ is the entry node of the interval it is in. Assume otherwise, there exists an interval $I$, which $h$ is a part of, having a different entry node $h^\prime$. Since none of $h$'s predecessors outside of $l$ are part of $I$, at least one of them will connect to $h$, thus making $h$ the other entry node of $I$ in addition to $h^\prime$. Therefore, the single-entry rule of an interval is violated.

Finally, we prove that all nodes of $l$ are in the same interval as $h$ if $l$ does not contain an inner loop construct. Assume otherwise, there exists at least one node in $l$ that is not in the interval, $I$, for which $h$ is the entry node. By construction, at least one immediate predecessor of the node not in $I$ is also not in $I$. Since $l$ does not contain an inner loop construct, none of the nodes in $l$ has a predecessor outside of $l$. By recursion, ultimately, the loop header $h$ will be the only immediate predecessor of a node that is not in $I$. Therefore, $h$ is also not in $I$, which contradicts the assumption.
\end{proof}

\vspace*{-7pt}
\begin{corollary}[Hierarchical Intervals for Nested Loops]
\label{cor:hi}
Given a nested loop structure on a control flow graph, nodes of the n-th most inner loop lie in an interval on the n-th order graph.
\end{corollary}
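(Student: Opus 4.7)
The plan is to prove Corollary~\ref{cor:hi} by induction on $n$, with Theorem~\ref{the:ci} supplying the work at each level. For the base case $n=1$, the innermost loop by definition contains no inner loop construct, so Theorem~\ref{the:ci} applies directly on the first order graph $G^1$: the loop header $h$ is the entry of an interval containing every node of that loop.

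For the inductive step I would assume the statement for all $k<n$, and let $L_n$ denote an $n$-th most inner loop, with $L_{n-1}, L_{n-2}, \dots, L_1$ the chain of loops nested inside it. The induction hypothesis, re-applied at each level, says that on $G^k$ the nodes of $L_k$ lie in an interval $I^k$. By the definition of the heightening operator $\eta_{\mathit{abs}}$, each such interval of size greater than one is collapsed to a single node on $G^{k+1}$, and that node persists unsplit on every higher order graph (subsequent heightenings only merge additional intervals, never refine existing ones). Consequently, on $G^n$, each of the loops $L_1, \dots, L_{n-1}$ has been compressed into a single node, while the back edge and body edges of $L_n$ are preserved as edges among the remaining super-nodes. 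Thus $L_n$, viewed on $G^n$, is a loop that no longer contains any inner loop construct.

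At this point Theorem~\ref{the:ci} applies to $L_n$ on $G^n$: its header is the entry of an interval containing every (super-)node of $L_n$ on that graph, which is exactly the content of Corollary~\ref{cor:hi}.

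The main obstacle I anticipate is the bookkeeping for the monotonicity/preservation claim: spelling out precisely that once an interval is collapsed into a single node under $\eta_{\mathit{abs}}$, it remains a single node under all further heightenings, and that the loop-forming back edge of $L_n$ survives this compression as an edge among the super-nodes (so that $L_n$ still \emph{is} a loop at order $n$, with an unambiguous header). A secondary subtlety is when $L_n$ contains several sibling inner loops at levels $< n$ rather than a single linear chain; here I would argue that each sibling is independently collapsed by its own application of the induction hypothesis before order $n$ is reached, so the appeal to Theorem~\ref{the:ci} on $G^n$ still sees a loop with no inner loop constructs.
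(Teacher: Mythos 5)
Your proposal is correct and follows essentially the same route as the paper: induction on $n$, with Theorem~\ref{the:ci} applied at each level after observing that the heightening operator collapses the interval of each inner loop into a single node, so the $(k+1)$-th most inner loop contains no inner loop construct on $G^{k+1}$. The paper's inductive step is terser (it simply states that the $k$-th most inner loop is rendered into a single node on the next-order graph), whereas you additionally flag and address the persistence of collapsed nodes and the case of sibling inner loops, which the paper glosses over.
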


\vspace*{-7pt}
\begin{proof}
Let P(n) be the statement that nodes of the n-th most inner loop lie in an interval on the n-th order graph. We give a proof by induction on n.

\textbf{{Base case}}: Show that the statement holds for the smallest number n = 1.\\
According to Theorem~\ref{the:ci}, P(0) is clearly true: nodes of the most inner loop lie in an interval on the first order graph.

\textbf{{Inductive step}}: Show that for any k $\geq$ 1, if P(k) holds, then P(k+1) also holds.\\
Assume the induction hypothesis that for a particular k, the single case n = k holds, meaning P(k) is true: nodes of the k-th most inner loop lie in an interval on the k-th order graph. By definition, nodes of the k-th most inner loop will be rendered into a single node within the (k+1)-th most inner loop on the (k+1)-th order graph. According to Theorem~\ref{the:ci}, we deduce that nodes of the (k+1)-th most inner loop will also lie in an interval on the (k+1)-th order graph. That is, the statement P(k+1) also holds, establishing the inductive step.

\textbf{{Conclusion}}: Since both the base case and the inductive step have been shown, by mathematical induction the statement P(n) holds for every number n.
\end{proof}

\begin{figure*}[tbp!]
    \adjustbox{max width=0.495\textwidth}{
	\begin{minipage}{0.5\textwidth}
    	\lstset{style=mystyle}
	    \lstinputlisting[linewidth=6.8cm,morekeywords={var, as}]{code/var.cs}
        \setlength{\abovecaptionskip}{-1pt}
        \setlength{\belowcaptionskip}{-8pt}
	    \caption{An example (extracted from ~\cite{allamanis2017learning}) of variable misuse. \texttt{clazz} highlighted in red is misused. Instead, \texttt{first} should have been used.}
	    \label{fig:misex}
	\end{minipage}}
    \adjustbox{max width=0.495\textwidth}{
	\begin{minipage}{0.5\textwidth}	
	    \lstset{style=mystyle}
	    \lstinputlisting[linewidth=6.8cm,morekeywords={var, public, string}]{code/method.cs}
        \setlength{\abovecaptionskip}{-1pt}
        \setlength{\belowcaptionskip}{-8pt}
	    \caption{An example function extracted from~\cite{Alon:2019:CLD:3302515.3290353} whose name is left out for models to predict. The answer in this case is \texttt{reverseArray}.}
		\label{fig:met}
	\end{minipage}}
\end{figure*}

\noindent
\textbf{\textit{GINN's Strengths.}}\, To conclude, the graph abstraction method translates to a loop-based abstraction scheme at the source code level. In particular, it sorts programs into hierarchies, where each level presents unique looping constructs for models to learn. Since looping constructs are integral parts of a program, this loop-based abstraction scheme helps GINN to focus on the core of a program for learning its feature representation.
In scalability regard, the graph abstraction method also offers a crucial advantage. Given the abstracted program graphs, even without the manually designed edges, 
GINN mostly deals with small, concise intervals. As a result, information rarely needs to be propagated over long distances, thus significantly enhancing GINN's capability in generalizing from the large graphs, an aspect that the existing GNN struggles with.

\section{Evaluation}
\label{sec:eva}

This section presents an extensive evaluation of GINN on three PL tasks. 
For each task, we compare GINN against the state-of-the-art.



\subsection{Selection Criteria}
To select the right downstream applications for evaluating GINN, we devise a list of criteria
below in the order of importance.
\begin{enumerate}
  \item We prefer tasks for which datasets are publicly accessible and baseline models are open sourced. Furthermore, we only consider models that are built upon Tensorflow, a widely used deep learning library, for reducing the engineering load.
  \item We prefer tasks solved by an influential work that is highly cited in the literature.
  \item We prefer tasks where implementations of GNN-based models are publicly accessible so that we can save our effort in building the baseline with GNN. The reason is
  it's not sufficient to only compare GINN with the state-of-the-art, which GNN may also outperform.
\end{enumerate}

\subsection{Evaluation Tasks}\label{subsec:tasks}
In the end, we select variable misuse prediction task~\cite{allamanis2017learning} (150+ citations) 
in which models aim to infer which variable should be used in a program location. As shown in Figure~\ref{fig:misex}, the variable \texttt{clazz} highlighted in red is misused. Instead, variable \texttt{first} should have been passed as the argument of the function \texttt{Assert.NotNull}. At first,~\citet{allamanis2017learning} use GGNN to predict the misuse variable from AST-based program graphs (more precisely AST with additional edges). Later, the joint model proposed by~\cite{vasic2018neural,Hellendoorn2020Global} has achieved better results and thus becomes the new state-of-the-art.

In addition, we select method name prediction in which models aim to infer the name of a method when given its body. Figure~\ref{fig:met} shows the problem setting where the function name has been stripped for models to predict. The correct answer in this case is \texttt{reverseArray}.~\citet{Alon:2019:CLD:3302515.3290353} (120+ citations) create the first large-scale, cross-project classification task in method names. Later works adopted a generative approach that generates method names as sequences of words~\cite{alon2018code2seq,fernandes2018structured,Wang101145}.


Last, we design a new task to evaluate GINN. Granted, machine learning has inspired a distinctive approach to a board range of problems in program analysis. However, most problems are catered to the strength of machine learning models in mining statistical patterns from data. It's unclear how well they can solve the long-standing problems that are traditionally tackled by symbolic, logic-based methods. In this task, we examine if models can accurately detect null pointer dereference, a type of deep, semantic, and prevalent bugs in software. In particular, we compare GINN against not only the standard GNN but also arguably the state-of-the-art static analysis tool Facebook Infer~\cite{calcagno2015moving,berdine2005smallfoot}.

\ignore{
\begin{table*}
\begin{minipage}{.33\textwidth}
\captionsetup{skip=1pt}
\caption{Result of Enumerative Approach.}
\centering
\adjustbox{max width=1\textwidth}{
\begin{tabular}{c|c|c}
\hline
 \tabincell{c}{Methods} & Accuracy & F1   \\\hline
Baseline & 0.727 & 0.117 \\\hline
GINN & 0.731 & 0.283 \\\hline
\hline
Gain & \multicolumn{2}{|l}{\hspace{1pt}\textbf{0.351} - \textit{0.285} = 0.066} \\\hline
\end{tabular}}
\label{tab:preNPE}
\end{minipage}\hfill
\begin{minipage}{.33\textwidth}
\captionsetup{skip=1pt}
\caption{Result of Joint Model Approach.}
\centering
\adjustbox{max width=1\textwidth}{
\begin{tabular}{c|c|c}
\hline
 \tabincell{c}{Methods} & Accuracy & F1   \\\hline
Baseline & 0.720 & 0.054 \\\hline
GINN & 0.730 & 0.102 \\\hline
\hline
Gain & \multicolumn{2}{|l}{\hspace{1pt}\textbf{0.351} - \textit{0.285} = 0.066} \\\hline
\end{tabular}}
\label{tab:preNPE}
\end{minipage}\hfill
\begin{minipage}{.33\textwidth}
\captionsetup{skip=1pt}
\caption{Result of Method Name Prediction.}
\centering
\adjustbox{max width=1\textwidth}{
\begin{tabular}{c|c|c}
\hline
 \tabincell{c}{Methods} & Precision & F1   \\\hline
Baseline & 0.261 & 0.166 \\\hline
GINN & 0.334 & 0.293 \\\hline
\hline
Gain & \multicolumn{2}{|l}{\hspace{1pt}\textbf{0.351} - \textit{0.285} = 0.066} \\\hline
\end{tabular}}
\label{tab:f1CC}
\end{minipage}\hfill
\end{table*}

\begin{figure*}
\begin{minipage}{.3\textwidth}
\centering
\begin{tikzpicture}[scale=0.92]
  \begin{axis}[
    xmax=100,xmin=0,
    ymin=0,ymax=0.9,
    xlabel=\emph{Graph Size},ylabel=\emph{Accuracy},
    xtick={0,20,40,...,100},
    ytick={0,0.1,0.20,...,0.9},
    xticklabel={\pgfmathparse{\tick}\pgfmathprintnumber{\pgfmathresult}\%},
    ]
     \addplot+ [mark=pentagon*,mark size=2.5pt,every mark/.append style={}] coordinates{(5, 0.566) (15, 0.519) (25, 0.585) (35, 0.537) (45, 0.679) (55, 0.667) (65, 0.585) (75, 0.63) (85, 0.604) (95, 0.648)};
     \addplot coordinates{(5, 0.592) (15, 0.664) (25, 0.763) (35, 0.701) (45, 0.677) (55, 0.835) (65, 0.864) (75, 0.853) (85, 0.712) (95, 0.651)};   
    \legend{\emph{EA},\emph{GINN}}
    \end{axis}
    \end{tikzpicture}
     ~%
    \subcaption{Enumerative Approach}
\label{fig:scaf}
\end{minipage}
\begin{minipage}{.3\textwidth}
\centering
\begin{tikzpicture}[scale=0.92]
  \begin{axis}[
    xmax=100,xmin=0,
    ymin=0,ymax=0.9,
    xlabel=\emph{Graph Size},ylabel=\emph{Accuracy},
    xtick={0,20,40,...,100},
    ytick={0,0.1,0.20,...,0.9},
    xticklabel={\pgfmathparse{\tick}\pgfmathprintnumber{\pgfmathresult}},
  ]
     \addplot+ [mark=pentagon*,mark size=2.5pt,every mark/.append style={}] coordinates{(5, 0.589)  (15, 0.614)  (25, 0.665) (35, 0.761)  (45, 0.734)  (55, 0.819)  (65, 0.886)  (75, 0.795)  (85, 0.795)  (95, 0.657)};
     \addplot coordinates{(5, 0.602)  (15, 0.629)  (25, 0.753) (35, 0.67)  (45, 0.745)  (55, 0.824)  (65, 0.827)  (75, 0.853)  (85, 0.701)  (95, 0.698)};
     \legend{\emph{Baseline}, \emph{GINN}}
    \end{axis}
    \end{tikzpicture}%
    ~%
\subcaption{Joint Model}
\label{fig:ttf}
\end{minipage}
\begin{minipage}{.3\textwidth}
\centering
\begin{tikzpicture}[scale=0.92]
  \begin{axis}[
    xmax=100,xmin=0,
    ymin=0,ymax=0.9,
    xlabel=\emph{Graph Size},ylabel=\emph{Accuracy},
    xtick={0,20,40,...,100},
    ytick={0,0.1,0.20,...,0.9},
    xticklabel={\pgfmathparse{\tick}\pgfmathprintnumber{\pgfmathresult}},
  ]
     \addplot+ [mark=pentagon*,mark size=2.5pt,every mark/.append style={}] coordinates{(5, 0.589)  (15, 0.614)  (25, 0.665) (35, 0.761)  (45, 0.734)  (55, 0.819)  (65, 0.886)  (75, 0.795)  (85, 0.795)  (95, 0.657)};
     \addplot coordinates{(5, 0.602)  (15, 0.629)  (25, 0.753) (35, 0.67)  (45, 0.745)  (55, 0.824)  (65, 0.827)  (75, 0.853)  (85, 0.701)  (95, 0.698)};
     \legend{\emph{Baseline}, \emph{GINN}}
    \end{axis}
    \end{tikzpicture}%
    ~%
\subcaption{Method Naming}
\label{fig:ttf2}
\end{minipage}
\caption{Experiment results.}
\end{figure*}
}

\subsection{Variable Misuse Prediction}
The state-of-the-art model for predicting variable misuse bugs is proposed by~\citet{Hellendoorn2020Global}. Their model is an instantiation of the conceptual framework invented by~\citet{vasic2018neural}. In this section, we first describe this framework, then we explain how to instantiate it using GINN.

\subsubsection{Joint Model for Localization and Repair}
\label{subsubsec:joint}
\citet{vasic2018neural} propose a joint model structure that learns to localize and repair variable misuse bugs simultaneously. Intuitively, their design exploits an important property of variable misuse bugs. That is both the misuse (\ie incorrectly used) and the repair (\ie should have been used) variable have already been defined somewhere in the program, therefore, the key is to identify their locations.~\citet{vasic2018neural} learned probability distributions over the sequence of tokens of a buggy program from which they pick the token of the highest probability to be the misuse or repair variable.
At a high-level, the joint model consists of three major components: initial embedding layer, core model, 
and two separate classifiers.
Below we describe each component in detail.

\vspace*{3pt}
\noindent
\textbf{\textit{Initial Embedding Layer.}}\,
Initial embedding layer is responsible for turning each symbol from the input vocabulary (\eg a token or a type of AST node) into a numerical vector, the format that is amenable to deep neural networks. A simple, crude method is to use the one-hot vectors, a $N \times N$ matrix representing every word in a vocabulary ($N$ denotes the size of the vocabulary). Each row corresponds to a word, which consists of 0s in all cells with the exception of a single 1 in a cell used uniquely to identify the word. For example, given a vocabulary, \{\texttt{a},\,\,\texttt{+},\,\,\texttt{b},\,\,\texttt{binary-exp}\}, the matrix $M$ in Equation~\ref{equ:one-hot} depicts the one-hot vectors of each token. A drawback of one-hot vectors is they don't capture the semantic relationship of words in the vocabulary as vectors are uniformly scattered in a high dimensional space (\ie the Euclidean distance between any vector with every other vector is a constant). A common remedy is to have the one-hot vectors multiply another matrix $W_{1} \in \mathbb{R}^{N\times d}$ to produce $\mathcal{E} \in \mathbb{R}^{N\times d}$.
The intuition is to expand the network's capacity with more learnable parameters in $W_{1}$ for searching a precise embedding matrix $\mathcal{E}$ for each word in the vocabulary.
$d$ denotes the number of columns in $W_{1}$, which is also the size of the embedding vectors. Equation~\ref{equ:one-hot} gives an example where $d$ equals to 3. The values of $W_{1}$ is initialized randomly and will be learned simultaneously with the network during training.

\vspace*{3pt}
\noindent
\textbf{\textit{Core Model.}}\, Given the embedding vector of each token, the core model computes the numerical representation of the input program, expressed as another matrix $\mathcal{C}$ where each row represents a token. Many neural architectures that can play the role of core model.~\citet{vasic2018neural} used Long Short Term Memory networks (LSTM)~\cite{LST}, a specialization of Recurrent Neural Networks (RNN).~\citet{Hellendoorn2020Global} explored other alternatives including GGNN, Transformer~\cite{vaswani2017attention}, the state-of-the-art sequence model, and RNN Sandwich, a mixture of sequence and graph models, which achieves the state-of-the-art results. For now, a core model can be considered as a black-box with an abstract interface $\phi: \mathcal{E}\rightarrow\mathcal{C}$. We defer the discussion of two instantiations of the core model to later sections.

\begin{equation}
\begin{gathered}
  \includegraphics[height=3cm]{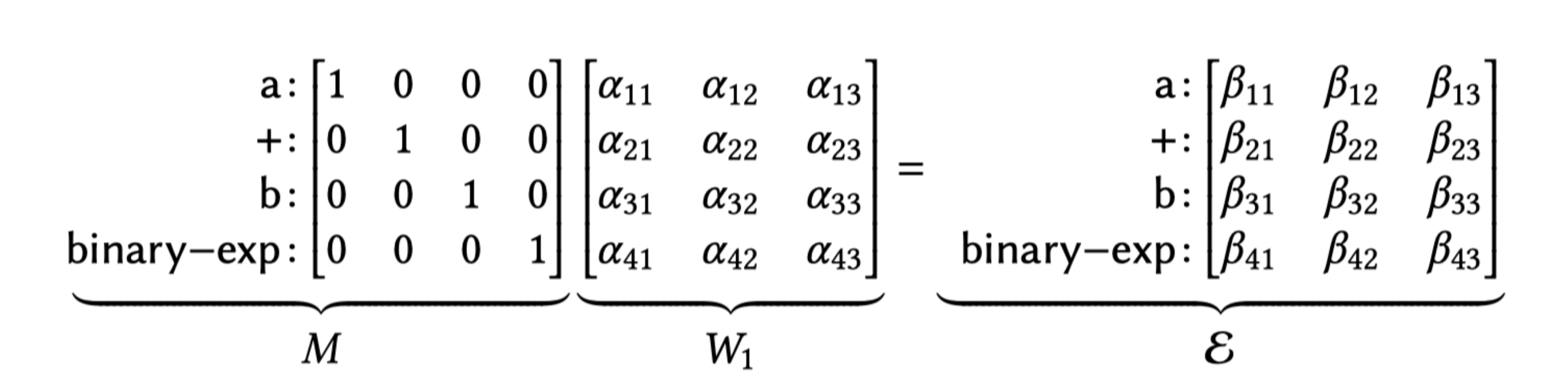}
  \end{gathered} 
\label{equ:one-hot}
\end{equation}

\vspace*{3pt}
\noindent
\textbf{\textit{Classifiers for Misuse and Repair Variables.}}\, This layer is responsible for predicting the location of misuse and repair variable\footnote{We followed~\citet{Hellendoorn2020Global}'s approach to consider a single variable misuse bug per program.}.
It distributes two probabilities to each token in the program: one for identification of the misuse variable and the other for the repair variable. The token with the highest probability of being the misuse or repair variable will be picked as the output.
At a technical level,~\citet{vasic2018neural} performed a linear transformation on matrix $\mathcal{C} \in \mathbb{R}^{k \times h}$ using another matrix $W_{2} \in \mathbb{R}^{h \times 2}$ to produce $P_{val} \in \mathbb{R}^{k \times 2}$. As explained before, $k$ is the number of tokens in a program; $h$ is the size of the embedding vector produced by a core model.
\begin{equation*}\label{equ:lin}
    P_{val} = \mathcal{C}W_{2}
\end{equation*}

Since $P_{val}$ is a non-normalized output, we apply $\mathit{softmax}$ function, a common practice in machine learning,
to map $P_{val}$ into probability distributions over the token sequence of an input program. Note that $\mathit{axis}=1$ in Equation~\ref{equ:applysoft} means the normalization happens along each column in which each value denotes the probability of one token being the misuse or repair variable.
\begin{equation}\label{equ:applysoft}
    P_{\mathit{pro}} = \mathit{softmax}(P_{val},\mathit{axis}=1)
\end{equation}

Finally, the model is trained to nudge $P_{\mathit{pro}}$ as close to the true distribution of misuse and repair variables as possible. In particular, we minimize the cross-entropy loss expressed by the difference between $P_{\mathit{pro}}$ and the true distributions, denoted by $\mathit{Loc}$ and $\mathit{Rep}$.
Similar to the one-hot vectors,  $\mathit{Loc}$ (\resp $\mathit{Rep}$) assigns a value of 1 to the actual index of misuse (\resp repair) variable among all tokens of the input program and 0 otherwise. We give the details of the loss function in Appendix~\ref{app:loss}.

\subsubsection{Instantiations of the Core Model}\label{subsubsec:ins} In this section, we review two instantiations of the core model presented in~\cite{Hellendoorn2020Global}: GGNN and RNN Sandwich. In particular, we discuss how each of them computes the matrix $\mathcal{C}$ introduced earlier.

As explained in Section~\ref{subsec:tasks},
GGNN works with AST as the backbone of graphs. To collect the numerical representations of each token in the program, ~\citet{Hellendoorn2020Global} extracted states of terminals nodes after GGNN had completed the message-passing routine, and then stacked them into 
matrix $\mathcal{C}$ (Equation~\ref{equ:GGNNC}).
\begin{equation}
\label{equ:GGNNC}
\mathcal{C} = \mathit{stack}([\mu_{v_{1}},\dots,\mu_{v_{n}}],\mathit{axis}=0), \forall v_{i} \in \Sigma
\end{equation}
where $\Sigma$ denotes the set of terminal nodes in AST; $\mathit{axis}=0$ indicates each $\mu_v$ forms a row in $\mathcal{C}$.

Next, we discuss RNN Sandwich~\cite{Hellendoorn2020Global}, the state-of-the-art model in predicting variable misuse bugs. In a nutshell, RNN Sandwich adopts a hybrid neural architecture combining a sequence model RNN and a graph model GGNN
intending to 
get the best of both worlds. That is, not only exploiting the semantic code structure using GGNN but also streamlining information flow through token sequences using RNN. Technically, the way it works is the following: (1) first, the matrix $\mathcal{E}$ produced in the initial embedding layer will be fed into an RNN. The results are a list of vectors, $h_1,\dots,h_n$, each of which represents a token. Compared to the initial embedding matrix $\mathcal{E}$, $h_1,\dots,h_n$ increases the precision of the token representation by capturing the temporal properties they display in a sequence.
Equation~\ref{equ:rnn} computes $h_1,\dots,h_n$. In simplest terms, RNN takes two vectors at each time step --- the embedding vector of the $t$-th token in the sequence, denoted by $\mathcal{E}[t]$ (assuming the $t$-th row in $\mathcal{E}$ is the embedding of the $t$-th token), and RNN's current hidden state after consuming the embedding of the previous token. The output is 
\makebox[\linewidth][s]{the new hidden state $h_{t}$. For interested readers, Appendix~\ref{app:RNN} provides a more detailed description}
\begin{equation}
\label{equ:rnn}
h_{t} = \mathit{RNN}(\mathcal{E}[t], h_{t-1})
\end{equation}
of RNN's computation model. (2) Next, GGNN takes over and computes the state vector for each node via the message-passing protocol. For node initialization, the token representations computed in Step (1) are assigned to the terminal nodes in an AST (Equation~\ref{equ:assign}) while the non-terminal nodes
\begin{equation}
\label{equ:assign}
\mu_{v_{i}} = h_{i}, \forall v_{i} \in \Sigma, \forall h_{i} \in [h_1,\dots,h_n]
\end{equation}
keep their representations computed from the initial embedding layer. (3) Finally, after GGNN has computed the state of each node in an AST, RNN takes back those of the terminal nodes and computes a new representation for each token, $h_{1}^\prime,\dots,h_{n}^\prime$, according to Equation~\ref{equ:rnn2} where $\mu_{v_{t}}$ 
\makebox[\linewidth][s]{denotes the state vector of the node corresponding to the $t$-th token in the program. Finally, the}
\begin{equation}
\label{equ:rnn2}
h_{t}^\prime = \mathit{RNN}(\mu_{v_{t}}, h_{t-1}^\prime)
\end{equation}
matrix $\mathcal{C}$ can be computed by Equation~\ref{equ:rnnsan}. 
\begin{equation}
\label{equ:rnnsan}
    \mathcal{C} = \mathit{stack}([h_{1}^\prime,\dots,h_{n}^\prime],\mathit{axis}=0)
\end{equation}

\vspace{.5pt}
\subsubsection{New Instantiations of Core Models Using GINN}~\label{subsubsec:ginnins} We propose two new instantiations built upon GINN to pair with GGNN and RNN Sandwich. In order to adapt GINN as an instantiation of the core model, an important issue needs to be addressed. Recall the interface defined for the core model: $\phi: \mathcal{E}\rightarrow\mathcal{C}$, since GINN can not compute representations of tokens from a standard control flow graph, matrix $\mathcal{C}$ can't be produced. Therefore, we modify control flow graphs as follows: (1) first we split a graph node representing a basic block into multiple nodes, each of which represents a single statement. Subsequently, we add additional edges to connect every statement with its immediate successor within the same basic block. For edges on the original control flow graphs, we change their start (\resp end) nodes from a basic block to its last (\resp first) statement after the split; (2) next, we replace each statement node with a sequence of nodes, each of which represents a token of the statement. Specifically, every token node is connected to its immediate successor, and the first token node will become the new start or end node of edges that were connecting the statement nodes before.

Figure~\ref{fig:cfgex} depicts an example of the new graph representation given the function in Figure~\ref{fig:codeex}. Note that our modification will not alter the partitioning of a graph into intervals. Instead, we only replace nodes denoting basic blocks with those denoting token as the new constituents of an interval. We have included a detailed explanation in Appendix~\ref{app:parti} for readers’ perusal. Under the new graph representation, GINN can compute the state vector for every token in the program, and in turn the matrix $\mathcal{C}$. Therefore, we can now instantiate the core model with GINN. Similarly, we can easily construct a variant of RNN Sandwich by swapping out GGNN for GINN as the new graph component.
\begin{figure}[thb!]
    \captionsetup[subfigure]{aboveskip=-1pt,belowskip=-1pt}
    \centering
    \begin{subfigure}{0.4\textwidth}
    	\lstinputlisting[linewidth=5.5cm,morekeywords={bool}]{code/exacf.txt}        
    	\caption{}
    	\label{fig:codeex}
    \end{subfigure}
    \begin{subfigure}{0.421\textwidth}
        \centering
        \includegraphics[width=\textwidth]{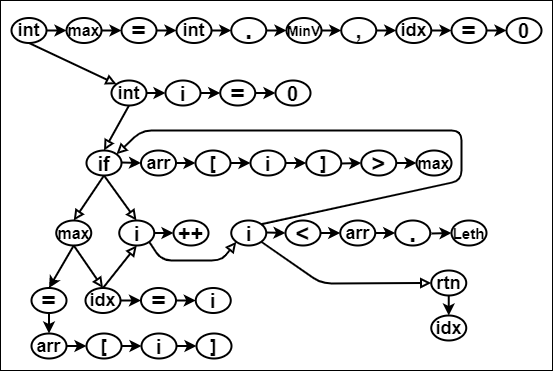}
    	\caption{}
        \label{fig:cfgex}
    \end{subfigure}
    \setlength{\belowcaptionskip}{-4pt}
    \caption{The program graph in~(\subref{fig:cfgex}) that represents the function in~(\subref{fig:codeex}). Edges with hallow arrow heads represent the control flows between program statements. The other edges connect tokens in a sequence.}
    \label{fig:gr}
\end{figure}

\subsubsection{Experimentation} Given the conceptual framework of the joint model and various instantiations discussed above, we now describe the experiment setup. 

\vspace*{3pt}
\noindent
\textbf{\textit{Dataset, Metric, and Baseline.}}\, \citet{Hellendoorn2020Global} worked with ETH Py150~\cite{Raychev2984041}, a publicly accessible dataset of python programs, on top of which they introduced buggy programs to suit the task of variable misuse prediction. Since the actual dataset used in their experiment is not publicly available, we follow their pre-processing steps in an attempt to replicate their dataset. Major tasks include the deduplication of ETH Py150, generation of buggy programs, and preservation of original programs as correct examples. Like their dataset, we maintain a balance between buggy and correct programs. In the end, we have 3M programs in total, among which We use 2M for training, 245K for validation, and 755K for test. A detailed description of the data generation is provided in Appendix~\ref{app:dataset}. Similarly, we adopt the metrics proposed in~\cite{vasic2018neural} to measure the model performance, such as (1) classification accuracy, the percentage of programs in the test set that are correctly classified as either buggy or correct; (2) localization accuracy, the percentage of buggy programs for which the bug location is correctly predicted; and (3) localization+repair accuracy (or joint accuracy), the percentage of buggy programs for which both the location and repair are correctly predicted. Regarding baselines, we use GGNN and RNN Sandwich as the two instantiations of the core model.

\vspace*{3pt}
\noindent
\textbf{\textit{Implementation.}}\, Since ETH Py150 provides AST for programs in the dataset, we convert each AST into the graph representation that GINN consumes (Appendix~\ref{app:convert}). We also implement Algorithm~\ref{alg:intervals} to partition the graphs into intervals. Since prior works have not open-sourced the joint model framework, we implemented our own in Tensorflow.
Regarding the core models, we use~\citet{allamanis2017learning}'s implementation of GGNN,
on top of which we realize GINN's graph abstraction method while keeping all GGNN's default parameters (\eg size of node embeddings, number of layers, \etc) intact.
We only implement one abstraction cycle of GINN's (original graph$\rightarrow$highest order graphs$\rightarrow$original graphs) as more cycles do not make a significant difference. For RNN Sandwich, we use the large sandwich model~\cite{Hellendoorn2020Global}, which is shown to work better than the vanilla model presented in Section~\ref{subsubsec:ins}. The improvement is due to the increased frequency of the alternation between sequence and graph models. That is, instead of inserting RNN only before and after GNN's computation, they wrap every few rounds of message-passing inside of GNN with an RNN to facilitate a thorough exchange between two neural architectures. We couple GINN (\resp GGNN) with an RNN to implement the GINN- (\resp GGNN-) powered RNN Sandwich.
Note that~\citet{Hellendoorn2020Global} also propose a lightweight graph representation by keeping only the terminal nodes from an AST as the input for both GGNN and RNN Sandwich. Even though these new program graphs have made models faster to train, their accuracy often decreases due to the lesser information contained in the input graphs. Hence, we don't consider this lightweight graphs in our evaluation.     


\begin{table}[thb!]
\centering
\caption{Results for the different instantiations of core models. For GINN and GINN-powered RNN Sandwich, we also include their results of an ablation study and an evaluation on an alternative design.}
\adjustbox{max width=1\columnwidth}{
\begin{tabular}{r|c|c|c|c}
\hline
Models & Configuration & \tabincell{c}{Classification \\Accuracy} & \tabincell{c}{Localization \\Accuracy} & \tabincell{c}{Localization+Repair \\Accuracy} \\\hline\hline
GGNN & Original & {74.0}\% & {58.1}\% & {56.0}\%    \\\hline 
\multirow{3}{*}{\textbf{GINN}} & Original & \textbf{74.9\%} & \textbf{60.5\%} & \textbf{60.0\%} \\\cline{2-5}
                                & Ablated  & 74.7\% & 58.8\% &57.6\% \\\cline{2-5}
                                & Alternative & 74.0\% & 59.5\% & 58.7\% \\\hline
Sandwich (GGNN) & Original & 74.7\% & 62.9\% & 61.1\% \\\hline
\multirow{3}{*}{\textbf{Sandwich (GINN)}} & Original & \textbf{75.8\%} & \textbf{69.3\%} & \textbf{68.4\%} \\\cline{2-5}
                                              & Ablated  & 75.1\% & 62.4\% & 62.3\% \\\cline{2-5}
                                              & Alternative & 75.0\% & 63.1\% & 62.0\%\\\hline
\end{tabular}
}
\label{tab:res}
\end{table}

All experiments including those to be presented later are performed on a desktop that runs
Ubuntu 16.04 having 3.7GHz i7-8700K CPU, 32GB RAM, and NVIDIA GTX 1080 GPU. As a pre-test, we repeat the experiment presented in~\cite{Hellendoorn2020Global}, and observed comparable model performance for both GGNN and GGNN-based RNN Sandwich, signaling the validity of both our model implementation and data curation. Interested readers may refer to Appendix~\ref{app:rep} for details.

\subsubsection{Results}\label{subsubsec:res}
First, we compare the performance of GINN (\resp GINN-powered RNN Sandwich) against GGNN (\resp GGNN-powered RNN Sandwich). Then we investigate the scalability of these four neural architectures. Later, we conduct ablation studies to gain a deeper understanding of GINN's inner workings. Finally, we evaluate an alternative design of GINN.

\vspace*{3pt}
\noindent
\textbf{\textit{Accuracy.}}\,
Table~\ref{tab:res} depicts the results of all four models using the aforementioned metrics (Appendix~\ref{app:tt} shows their training time). We focus on rows for the original configuration of each model, and discuss the rest later. Regarding the classification accuracy, all models perform reasonably well, and GINN-powered RNN Sandwich and GINN are the top two core models despite by a small margin. For more challenging tasks like localization or repair, GINN-powered RNN Sandwich now displays a significant advantage over all other models, in particular, it outperforms GGNN-powered RNN Sandwich, the state-of-the-art model in variable misuse prediction, by 6.4\% in localization and 7.3\% in joint accuracy.
To dig deeper, We manually inspect the predictions made by each model and find that GINN-powered RNN Sandwich is considerably more precise at reasoning the semantics of a program. Below we gave two examples to illustrate our findings. 

For Figure~\ref{fig:varmis1}, GINN-powered RNN Sandwich is the only model that not only locates but also fixes the misused variable (\ie \texttt{orig\_sys\_path} highlighted within the shadow box). In contrast, all baseline models consider \texttt{orig\_sys\_path} as the correctly used variable, which is not totally unreasonable. In fact, appending an item (\texttt{item}) when it is not yet in the list (\texttt{orig\_sys\_path}) is a very common pattern models need to learn. However, in this case,
if \texttt{orig\_sys\_path} was the correct variable, \texttt{new\_sys\_path} would have been an empty list when being assigned to \texttt{sys.path[:0]} in the last line. GINN-powered RNN Sandwich is capable of capturing the nuance of the semantics this program denotes, and not getting trapped by the common programming paradigms. As for Figure~\ref{fig:varmis2}, GINN and GGNN-powered RNN Sandwich also correctly localizes the misused variable \texttt{request}, but none of them predicts the right repair variable \texttt{sq}. The signal there is the fact that \texttt{fts} is a list, which does not have \texttt{keys()} method. GINN-powered RNN Sandwich is again the only neural architecture that produces the correct prediction end-to-end, demonstrating its higher precision in reasoning the semantics of a program.

\begin{figure}[thb!]
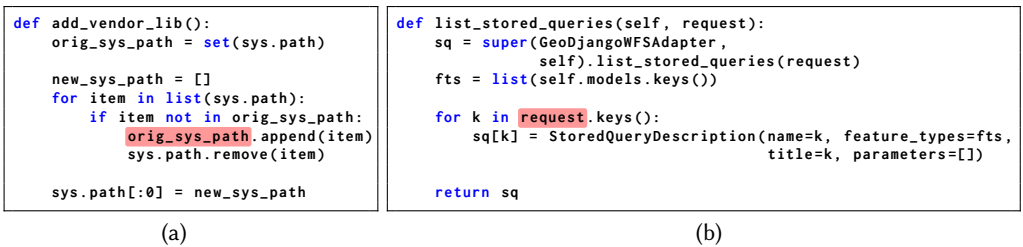

    \captionsetup[subfigure]{aboveskip=-1pt,belowskip=-4pt}
    \centering
    \begin{subfigure}{0.31\textwidth}
    	\lstset{style=mystyle}
    	\lstinputlisting[linewidth=4.75cm,morekeywords={def,list,set,not,in}]{code/varmis1.txt}        
    	\caption{}
    	\label{fig:varmis1}
    \end{subfigure}
    \,\,\,
    \begin{subfigure}{0.6\textwidth}
        \centering
    	\lstinputlisting[linewidth=8.2cm,morekeywords={def,super,list,in}]{code/varmis2.txt}        
    	\caption{}
    	\label{fig:varmis2}
    \end{subfigure}
    \setlength{\belowcaptionskip}{-18pt}
    \caption{Two programs only GINN-powered RNN Sandwich makes the correct prediction end-to-end.}
    \label{fig:misexs}
\end{figure}

\vspace*{3pt}
\noindent
\textbf{\textit{Scalability.}}\, We further analyze the results we obtained from the previous experiment to investigate the scalability of the four neural architectures. In particular, we divide the entire test set into ten subsets, each of which consists of graphs of similar size. We then record the performance of all models on each subset starting from the smallest to the largest graphs. Note that GGNN, at the core of both compared baselines, has already had additional edges incorporated into the program graphs to alleviate the scalability concern~\cite{allamanis2017learning}. However, as explained in Section~\ref{sec:intro},~\citet{allamanis2017learning} do not provide a principled guideline as to where exactly to add the edges given a program graph specifically. Therefore, we can only follow their approach by universally adding all types of edges they proposed (9 in total which we list in Appendix~\ref{app:edges}) to all program graphs. In contrast, we do not incorporate any additional edge apart from those that are present in the graph representation for GINN. 

\setlength{\intextsep}{12pt}

\begin{figure*}[thb!]
\centering
   \begin{tikzpicture}[scale=0.72]
   \begin{axis}[
    xmax=10,xmin=1,
    ymin=0.55,ymax=1.05,
    xlabel=\emph{N-th Set (Ranked by Size of Graphs)},ylabel=\emph{Classification Accuracy},
    xtick={1,2,3,...,10},
    ytick={0.55,0.6,...,1.05},
    xticklabels={1,2,...,10},
   ]
     \addplot+ [mark=pentagon*,mark size=2.5pt] coordinates{(1, 0.739) (2, 0.741) (3, 0.740) (4, 0.741) (5, 0.740) (6, 0.739) (7, 0.741) (8, 0.740) (9, 0.741) (10, 0.741)};
     \addplot+ coordinates{(1, 0.744) (2, 0.716) (3, 0.749) (4, 0.748) (5, 0.749) (6, 0.750) (7, 0.746) (8, 0.749) (9, 0.765) (10, 0.782)};
     \addplot+ coordinates{(1, 0.689) (2, 0.766) (3, 0.765) (4, 0.767) (5, 0.711) (6, 0.766) (7, 0.766) (8, 0.689) (9, 0.766) (10, 0.782)};
     \addplot+ [mark=triangle*,mark size=2.5pt] coordinates{(1, 0.692) (2, 0.769) (3, 0.769) (4, 0.769) (5, 0.714) (6, 0.769) (7, 0.846) (8, 0.692) (9, 0.769) (10, 0.786)};
    \legend{GGNN, GINN, Sand, I-Sand}
    \end{axis}
    \end{tikzpicture}%
    ~%
    \begin{tikzpicture}[scale=0.72]
   \begin{axis}[
    xmax=10,xmin=1,
    ymin=0.4,ymax=1.2,
    xlabel=\emph{N-th Set (Ranked by Size of Graphs)},ylabel=\emph{Location Accuracy},
    xtick={1,2,...,10},
    ytick={0.4,0.5,...,1.2},
    xticklabels={1,2,...,10},
    ]
     \addplot+ [mark=pentagon*,mark size=2.5pt,every mark/.append style={}] coordinates{(1, 0.433+0.3) (2, 0.363+0.3) (3, 0.341+0.3) (4, 0.277+0.3) (5, 0.254+0.3) (6, 0.269+0.3) (7, 0.270+0.3) (8, 0.219+0.3) (9, 0.291+0.3) (10, 0.294+0.3)};
     \addplot coordinates{(1, 0.266+0.3) (2, 0.583+0.3) (3, 0.498+0.3) (4, 0.376+0.3) (5, 0.242+0.3) (6, 0.260+0.3) (7, 0.303+0.3) (8, 0.253+0.3) (9, 0.278+0.3) (10, 0.303+0.3)};   
     \addplot coordinates{(1, 0.500+0.3) (2, 0.420+0.3) (3, 0.385+0.3) (4, 0.304+0.3) (5, 0.291+0.3) (6, 0.298+0.3) (7, 0.336+0.3) (8, 0.275+0.3) (9, 0.340+0.3) (10, 0.365+0.3)};   
     \addplot+ [mark=triangle*,mark size=2.5pt] coordinates{(1, 0.407+0.3) (2, 0.604+0.3) (3, 0.604+0.3) (4, 0.435+0.3) (5, 0.281+0.3) (6, 0.323+0.3) (7, 0.421+0.3) (8, 0.337+0.3) (9, 0.359+0.3) (10, 0.367+0.3)}; 
    \legend{GGNN, GINN, Sand, I-Sand}
    \end{axis}
    \end{tikzpicture}%
    ~%
     \begin{tikzpicture}[scale=0.72]
   \begin{axis}[
    xmax=10,xmin=1,
    ymin=0.3,ymax=1.1,
    xlabel=\emph{N-th Set (Ranked by Size of Graphs)},ylabel=\emph{Location+Repair Accuracy},
    xtick={1,2,...,10},
    ytick={0.3,0.4,...,1.1},
    xticklabels={1,2,...,10},
    ]
    
     \addplot+ [mark=pentagon*,mark size=2.5pt,every mark/.append style={}]     coordinates{(1, 0.463+0.28) (2, 0.350+0.28) (3, 0.310+0.28) (4, 0.307+0.28) (5, 0.278+0.28) (6, 0.261+0.28)   (7, 0.250+0.28) (8, 0.220+0.28) (9, 0.225+0.28) (10, 0.239+0.28)};
     \addplot coordinates{(1, 0.282+0.28) (2, 0.579+0.28) (3, 0.506+0.28) (4, 0.335+0.28) (5, 0.250+0.28) (6, 0.240+0.28) (7, 0.283+0.28) (8,.250+0.28) (9, 0.268+0.28) (10, 0.230+0.28)};
     \addplot coordinates{(1, 0.490+0.28) (2, 0.410+0.28) (3, 0.375+0.28) (4, 0.290+0.28) (5, 0.281+0.28) (6, 0.286+0.28) (7, 0.326+0.28) (8, 0.263+0.28) (9, 0.329+0.28) (10, 0.355+0.28)};
     \addplot+ [mark=triangle*,mark size=2.5pt] coordinates{(1, 0.401+0.28) (2, 0.571+0.28) (3, 0.510+0.28) (4, 0.420+0.28) (5, 0.350+0.28) (6, 0.353+0.28) (7, 0.373+0.28) (8, 0.289+0.28) (9, 0.380+0.28) (10, 0.407+0.28)}; 
    \legend{GGNN, GINN, Sand, I-Sand}
     \end{axis} 
     \end{tikzpicture}
    \caption{Investigating the scalability of all four neural architectures. I-Sand (\resp Sand) denotes the GINN-powered RNN Sandwich (\resp GGNN-powered RNN Sandwich).}
    \label{fig:scavar}
\end{figure*}
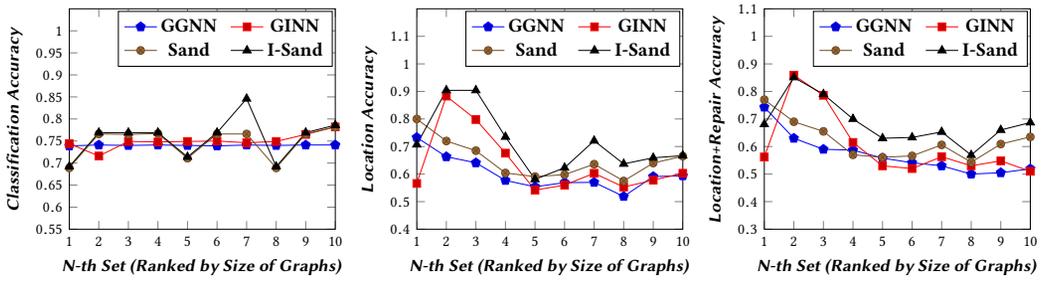

Figure~\ref{fig:scavar} shows how the performance of each model varies with the size of graphs. We skip the metric of classification accuracy under which models are largely indistinguishable. For localization and joint accuracy, we make several observations of the performance trend for each model. First, GGNN suffers the largest performance drop among all models --- more than 10\% (\resp 20\%) under location (\resp joint) accuracy. This phenomenon shows there is indeed a scalability issue with GGNN. Second, the combination of sequence and graph model helps to make the neural architecture more scalable than GGNN alone. Furthermore, by replacing GGNN with GINN in RNN Sandwich, we obtain a model that is clearly the most scalable. Barring few sets of graphs, GINN-powered RNN Sandwich is considerably more accurate than any other model, especially under the joint accuracy.
Finally, 
GINN alone also manages to improve the scalability of GGNN. As described earlier, the improvement is magnified under the assistance of the RNN in the sandwich model. 

\vspace*{3pt}
\noindent
\textbf{\textit{Ablation Study.}}\,
Unlike the prior work~\cite{allamanis2017learning}, all edges in our graph representation are indispensable, thus can not be ablated. Therefore, the goal of this ablation study is to quantify the influence of the graph abstraction method, the crux of GINN's learning approach, on GINN's performance. Because GINN is identical to GGNN without the graph abstraction operators, we evaluate GGNN when GINN's program graphs --- a variant of control flow graph --- are provided as input data. Rows for ablated configuration in  Table~\ref{tab:res} show GINN (\resp GINN-powered RNN Sandwich) becomes only slightly more accurate than GGNN (\resp GGNN-powered RNN Sandwich) after the ablation, in other words, our graph abstraction method indeed helps models to learn.

\noindent
\textbf{\textit{Alternative Design.}}\, As mentioned in Section~\ref{sec:met}, we evaluate an alternative design of GINN in which freshly 
created nodes --- due to the heightening operator --- are initialized to be the average of the replaced nodes; conversely, for nodes that are created by the lowering operator, they receive an equal share of the split node that they emerge from. In essence, we assign the weight, $\alpha_{v}$, used in Equation~\ref{equ:merge} and~\ref{equ:split} to be $1/\lvert I^{n}_h \rvert$. Rows of alternative configuration in Table~\ref{tab:res} show that the new design results in a notable decrease in model accuracy, which motivates the original design of GINN.

\subsection{Method Name Prediction}
The state-of-the-art model in method name prediction, sequence GNN~\cite{fernandes2018structured}, adopts a typical encoder-decoder architecture~\cite{devlin2014fast,cho-cho2014learning}.\footnote{\citet{Wang101145} proposed a model to solve the same problem. Since (1) they did not perform a head-to-head comparison against sequence GNN; and (2) their methodology requires program executions, we do not consider their model for this task.} In a similar vein to RNN Sandwich, sequence GNN also employs a combination of graph and sequence model to encode a program, however, the synergy between the two neural architectures in this case is considerably simpler. They only perform the first two steps of the vanilla joint model presented in Section~\ref{subsubsec:ins}. That is, RNN first learns the sequence representation of each token in a program before GGNN computes the state for every node in the AST.
For decoder, they use another RNN, which generates the method name as a sequence of words. While decoding, it also attends to the states of nodes in the AST, a common technique for improving the accuracy of models adopting encoder-decoder architecture~\cite{bahdanau2014neural}. 

Instantiating the framework of sequence GNN's with GINN is a straightforward task. We use GINN as the new graph model for the encoder while keeping the remaining parts of the framework intact. Like prior experiments, GINN works with the same graph format derived from control flow graphs and customized to include nodes of tokens.  

\subsubsection{Experimentation} Given the sequence GNN's framework and the instantiation with GINN, we describe the setup of our experiment.

\vspace*{3pt}
\noindent
\textbf{\textit{Dataset, Metric, and Baseline.}}\, We consider Java-small proposed by~\citet{alon2018code2seq}, a publicly available dataset used in~\cite{fernandes2018structured}, adopting the same train-validation-test splits they have defined. We also follow~\citet{fernandes2018structured}'s approach to measure performance using F1 and ROGUE score over the generated words (Appendix~\ref{app:mnmetrics}). For sequence GNN baseline, we use the model that achieves state-of-the-art results on Java-small, which employs bidirectional-LSTM as the sequence and GGNN as the graph model for encoder and another LSTM for decoder. Apart from the aforementioned attention mechanism, decoder also integrates a pointer network~\cite{NIPS2015_5866} to directly copy tokens from the input program. 

\vspace*{3pt}
\noindent
\textbf{\textit{Implementation.}}\, 
We use JavaParser to extract AST out of programs in Java-small. We adopt the same procedure as before to convert AST into the graph representation that GINN consumes. Regarding the model implementation, we use the model code of sequence GNN open-sourced on GitHub, within which we implemented GINN's graph abstraction methods. Like before, all default model parameters in their implementation are kept as they are and we only implement one abstraction cycle within GINN. We name our model built out of GINN sequence GINN.

\subsubsection{Results}\label{subsubsec:metres}
In this section, we first compare the performance of sequence GINN against sequence GNN. Then we investigate the scalability of both neural architectures. Finally, we conduct a similar ablation study and alternative design evaluation with those presented in the previous task. 

\begin{figure}[t]
    \adjustbox{max width=0.4\textwidth}{
    \begin{subfigure}{0.5\textwidth}
    	\lstset{style=mystyle}
    	\lstinputlisting[linewidth=7.2cm]{code/method1.java}        
    	\caption{Predicted to be {\footnotesize\texttt{SelectModifier}} by the baseline.}
    	\label{fig:met1}
    \end{subfigure}}    
    \,
    \adjustbox{max width=0.58\textwidth}{
    \begin{subfigure}{0.6\textwidth}
    \vspace{3.3pt}
        \centering
    	\lstset{style=mystyle}
    	\lstinputlisting[basicstyle=\linespread{1.09}\scriptsize\ttfamily\bfseries,linewidth=9.8cm]{code/method2.java}        
    	\caption{Predicted to be {\footnotesize\texttt{CheckSource}} by the baseline.}
    	\label{fig:met2}
    \end{subfigure}}
    \setlength{\abovecaptionskip}{7pt}    
    \setlength{\belowcaptionskip}{-4pt}    
    \caption{Two programs sequence GINN predicts correctly but not the baseline.}
    \label{fig:metexs}
\end{figure}

\vspace*{3pt}
\noindent
\textbf{\textit{Accuracy.}}\, Table~\ref{tab:resmet} depicts the results of the two models. Sequence GINN significantly outperforms the sequence GNN across all metrics (\eg close to 10\% in F1).
Through our manual inspection, we find that sequence GINN's strength stands out when dealing with programs of greater complexity. Take programs in Figure~\ref{fig:metexs} as examples, even though both of them denote relatively simple semantics, sequence GNN struggles with their complexity (\eg large program size for Figure~\ref{fig:met1}\footnote{The average size of the AST in Java-small is around 100 while this program has 150+ nodes in its AST. 
}, and nested looping construct for Figure~\ref{fig:met2}). Therefore, its predictions do not precisely reflect the semantics of either program. sequence GNN predicts the name of the program in Figure~\ref{fig:met1} (\resp~\ref{fig:met2}) to be \texttt{SelectModifier} (\resp \texttt{CheckSoucre}) whereas sequence GINN produces correct predictions for both programs, which are highlighted within the shadow boxes.
\setlength{\intextsep}{15pt}
\begin{table}[h]
\centering
\caption{The first two rows show the results for sequence GINN and sequence GNN; next two rows are the results of the ablation study and the evaluation on alternative design.}
\vspace*{-2pt}
\adjustbox{max width=\columnwidth}{
\begin{tabular}{c|c|c|c|c}
\hline
Models & Configuration & F1 & ROGUE-2 & ROGUE-L \\\hline\hline
Sequence GNN & Original &51.3\% & 24.9\% & 49.8\%    \\\hline
\multirow{3}{*}{\textbf{Sequence GINN}} & Original    & \textbf{60.2\%} & \textbf{29.5\%} & \textbf{54.7\%} \\\cline{2-5}
    & Ablated     & 56.3\% & 26.3\% & 51.6\% \\\cline{2-5}
    & Alternative & 52.2\% & 24.8\% & 50.4\% \\\hline
\end{tabular}
}
\label{tab:resmet}
\end{table}

\vspace*{3pt}
\noindent
\textbf{\textit{Scalability.}}\,
In the same setup as the scalability experiment of variable misuse prediction task, we investigate the scalability of both models. To strengthen sequence GNN, we incorporate all additional edges to its program graphs~\cite{allamanis2017learning}. As depicted in Figure~\ref{fig:scamet}, under all three metrics, sequence GINN outperforms sequence GNN throughout the entire test set, especially in F1 where sequence GINN displays a wider margin over sequence GNN on larger graphs (\ie last 5 sets of the graphs except the seventh set) than it does on smaller graphs (\ie first 5 sets of graphs). 



\begin{figure*}[thb!]
\centering
  \begin{tikzpicture}[scale=0.72]
  \begin{axis}[
    xmax=10,xmin=1,
    ymin=0,ymax=1,
    xlabel=\emph{N-th Set (Ranked by Size of Graphs)},ylabel=\emph{F1},
    xtick={1,2,3,...,10},
    ytick={0,0.1,0.2,...,1},
    xticklabels={1,2,...,10},
  ]

     \addplot+ [mark=pentagon*,mark size=2.5pt,every mark/.append style={}] coordinates{(1, 0.65494) (2, 0.63971) (3, 0.58665) (4, 0.53307) (5, 0.49101) (6, 0.39495) (7, 0.47513) (8, 0.42398) (9, 0.42614) (10, 0.37709)};
     \addplot coordinates{(1, 0.70855) (2, 0.68790) (3, 0.65185) (4, 0.63421) (5, 0.58905) (6, 0.59201) (7, 0.52986) (8, 0.59218) (9, 0.57021) (10, 0.48576)};    
     \legend{Baseline, GINN}
    \end{axis}
    \end{tikzpicture}%
    ~%
    \begin{tikzpicture}[scale=0.72]
    \begin{axis}[
    xmax=10,xmin=1,
    ymin=0,ymax=1,
    xlabel=\emph{N-th Set (Ranked by Size of Graphs)},ylabel=\emph{ROGUE-2},
    xtick={1,2,...,10},
    ytick={0,0.1,...,1},
    xticklabels={1,2,...,10},
    ]
     \addplot+ [mark=pentagon*,mark size=2.5pt,every mark/.append style={}] coordinates{(1, 0.34) (2, 0.301) (3, 0.268) (4, 0.230034) (5, 0.20) (6, 0.22) (7, 0.24) (8, 0.23034) (9, 0.23) (10, 0.235)};
     \addplot coordinates{(1, 0.40115) (2, 0.38025) (3, 0.33455) (4, 0.27722) (5, 0.25000) (6, 0.24000) (7, 0.26103) (8, 0.26082) (9, 0.26769) (10, 0.25103)};
     \legend{Baseline, GINN}
    \end{axis}
    \end{tikzpicture}%
    ~%
     \begin{tikzpicture}[scale=0.72]
   \begin{axis}[
    xmax=10,xmin=1,
    ymin=0,ymax=0.9,
    xlabel=\emph{N-th Set (Ranked by Size of Graphs)},ylabel=\emph{ROGUE-L},
    xtick={1,2,...,10},
    ytick={0,0.1,...,0.9},
    xticklabels={1,2,...,10},
    ]

     \addplot+ [mark=pentagon*,mark size=2.5pt,every mark/.append style={}] coordinates{(1, 0.58125) (2, 0.57519) (3, 0.47026) (4, 0.46544) (5, 0.47480) (6, 0.42812) (7, 0.36565) (8,.36743) (9, 0.35743) (10, 0.35900)};
     \addplot coordinates{(1, 0.63238) (2, 0.61221) (3, 0.58926) (4, 0.57371) (5, 0.56970) (6, 0.48150) (7, 0.46928) (8,.44146) (9, 0.44990) (10, 0.48432)};
     \legend{Baseline, GINN}
     \end{axis} 
     \end{tikzpicture}
     \setlength{\belowcaptionskip}{-5pt}
     \setlength{\abovecaptionskip}{7pt}
    \caption{Investigating the scalability of all four neural architectures.}
    \label{fig:scamet}
\end{figure*}
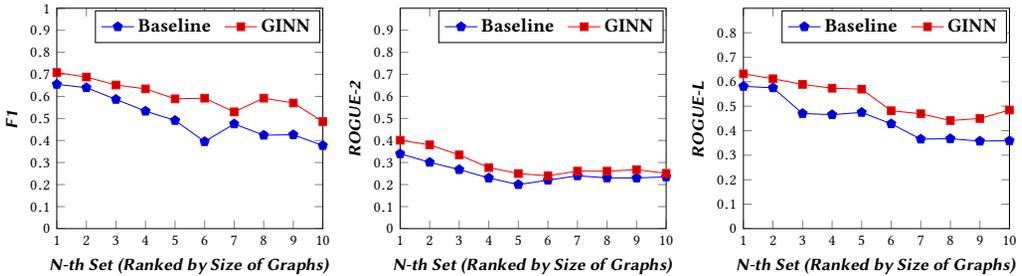

\vspace*{3pt}
\noindent
\textbf{\textit{Ablation Study.}}\,
We conduct the same ablation study to quantify the influence of the graph abstraction method on the performance of sequence GINN. As shown in Table~\ref{tab:resmet}, without the graph abstraction operators, sequence GINN still performs somewhat better than the baseline but notably worse than the original configuration, indicating the considerable influence of the abstraction method on sequence GINN's performance.

\vspace*{3pt}
\noindent
\textbf{\textit{Alternative Design.}}\,
Following the evaluation of the alternative design in variable misuse prediction, we adopt the same approach to initializing the state of freshly created nodes. The last row in Table~\ref{tab:resmet} shows that this design causes GINN to be less accurate by a fairly wide margin. 

\subsection{Detection of Null Pointer Dereference}
First, we present a framework for creating neural bug detectors in catching null pointer dereference. Then, we describe our experiment setup. Finally, we report our evaluation results. 

\vspace*{3pt}
\noindent
\textbf{\textit{Neural Bug Detection Framework.}}\, We devise a simpler program representation for this task. Given a control flow graph, we only split a node of basic block into multiple nodes of statement to aid the localization of bugs at the level of lines. When checking a method, we also include its context by stitching the graphs of all its callers and callees to that of itself.
Regarding the initial state of each statement node, we use RNN to learn a representation of its token sequence. As explained in Section~\ref{subsubsec:joint}, after each token is embedded into a numerical vector (by the initial embedding layer), we feed the embeddings of all tokens in a sequence to the network whose final hidden state (\ie the last $h_t$ in Equation~\ref{equ:rnn}) 
will be extracted as the initial node state. 

After GINN computes the state vector for each node, and in turn the entire graph (by aggregating the states of all nodes), we train a feed-forward neural network~\cite{svozil1997introduction} to predict (using the representation of the entire graph) whether or not a method is buggy. If it is, we will use the same network to predict the bug locations. Our rationale is a graph as a whole provides stronger signals for models to determine the correctness of a method. Correct methods will be refrained from further predictions, leading to fewer false warnings. On the other hand, if a method is indeed predicted to be buggy, our model will then pick the top $N$ statements ranked by their probabilities as potential bug locations. This prediction mode intends to provide flexibility for developers to tune their analysis toward either producing less false warnings or identifying more bugs. 

\vspace*{3pt}
\noindent
\textbf{\textit{Dataset.}}\, We target Java code due to the popularity and availability of Java datasets. In addition to bugs provided by existing datasets~\cite{just2014defects4j,saha2018bugs,ye2014learning}, we extract additional bugs from Bugswarm projects~\cite{tomassi2019bugswarm} to enrich our dataset for this experiment. 
For example, given a bug description "\textit{Issue \#2600...}" contained in a commit message, we search in the bug tracking system (\eg Bugzilla) using id \textit{\#2600} to retrieve the details of the bug, against which we match pre-defined keywords to determine if it's a null pointer dereference bug.
Next, we refer back to the commit to find out bug locations, specifically, we consider the lines that are modified by this commit to be buggy and the rest to be clean.
We acknowledge the location of a bug and its fix may not be precisely the same thing. If there are ever cases where two locations significantly differ, GINN and the compared baseline will be equally affected. 

Any detection is in nature a classification problem having subjects of interest one class of the data, and the rest the other. Therefore, in the context of bug detection, it's also necessary to supply the correct methods apart from the buggy methods for training. A natural approach would be directly taking the fixed version of each buggy method. However, this approach is unlikely to work well in practice. Because in a real problem setting, an analyzer will never set out to differentiate the two versions (\ie correct and buggy) of the same program, instead, it has to separate buggy programs from correct programs that are almost guaranteed to be functionally different. For this reason, we pair each buggy method with a correct method that is syntactically the closest (defined by the tree edit distance between their AST) from the same project. However, due to the shortage of bug instances in existing datasets even after taking into account the additional bugs we extracted ourselves, we include more correct methods \wrt each buggy method to further enlarge our dataset. Table~\ref{tab:stas} in Appendix~\ref{app:npedata} shows the details including the projects from which we extract the code snippet, a brief description of their functionality, size of their codebase, and the number of buggy methods we extract from each project.
We maintain an approximately 3:1 ratio between the correct and buggy methods for each project. In total, we use 64 projects for training and 13 for test. 



\vspace{3pt}
\noindent
\textbf{\textit{Objective, Metric, and Baseline.}}\,
We evaluate how accurately the neural bug detector can localize a null pointer dereference within a method. Because warning developers about buggy functions hardly makes a valuable tool.
As we deal with an unbalanced dataset (\ie cleaning lines are more than buggy lines), we forgo the accuracy metric since a bug detector predicting all lines to be correct would yield a decent accuracy but make a totally useless tool. Instead, we opt for precision, recall, and F1 score (Equation~\ref{equ:pre}--\ref{equ:f1} in Appendix~\ref{app:metrics}), metrics are commonly used in defect prediction literature~\cite{wang2016automatically,pradel2018deepbugs}. To show our interval-based graph abstraction method can improve a variety of graph models, we use the standard GNN to build a baseline bug detector for this experiment. We do not include classical static analysis tools as additional baselines
since comparing them against learning-based bug detectors 
is likely to be unfair due to the noise issue raised earlier. That is static analyzers could very well discover hidden bugs or report different locations of the same bug. Properly handling those situations require manual inspection, which is hard to scale. On the other hand, machine learning models are less susceptible to this problem as the training set yield in principle the same distribution of the test set. 

\vspace*{3pt}
\noindent
\textbf{\textit{Implementation.}}\, 
We construct the control flow and interval graphs using Spoon~\cite{pawlak:hal-01169705}, an open-source library for analyzing Java source code. To efficiently choose correct methods to pair with a buggy method, we run DECKARD~\cite{4222572}, a clone detection tool adopting an approximation of the shortest tree edit distance algorithm to identify similar code snippets. To realize the neural bug detector powered by GNN, we make two major changes to~\citet{allamanis2017learning}'s implementation of GGNN. First, we adopt the method defined in Equation~\ref{equ:type} and~\ref{equ:sum} for updating the states of nodes in the message-passing procedure.
Second,
we design two loss functions both in the form of cross-entropy for the bug detection task. The first one is designed for the classification of buggy and correct methods and the other for the classification of buggy and clean lines inside of a buggy method. Next, we realize the graph abstraction method within the implementation of GNN-powered bug detector to build GINN-powered bug detector. Similar to prior tasks, we implement only one abstraction cycle for GINN. For the remaining GGNN's parameters after our re-implementation, we keep them in both bug detectors.

\vspace*{3pt}
\noindent
\textbf{\textit{Performance.}}\, Table~\ref{tab:pre}--\ref{tab:f1} depict the precision, recall, and F1 for both bug detectors. We also examine the impact of the program context --- denoted by the number in parenthesis --- on the performance of each bug detector. More concretely, (0) means each method will be handled by itself. (1) stitches control flow graphs of all callers and callees (denoted by $\mathcal{F}$) to that of the target method, and (2) further integrates the graphs of all callers and callees of every method in $\mathcal{F}$. 
Exceeding 2 has costly consequences. First, most data points will have to be filtered out for overflowing the GPU memory. Moreover, a significant portion of the remaining ones would also have to be placed in a batch on its own, resulted in a dramatic increase in training time. We provide the same amount of context for each method in the test set as we do in the training set to maintain a consistent distribution across the entire dataset. Columns in each table correspond to three prediction modes in which models pick 1, 3, or 5 statements to be buggy after a method is predicted buggy.

\makebox[\dimexpr\linewidth-\parindent][s]{Overall, the neural bug detector built out of GINN consistently outperforms the baseline using}\par

\noindent
all proposed metrics, especially in F1 score where GINN beats GNN by more than 10\%. Regarding the impact of the context, we find that more context mostly but not always leads to improved
\begin{table*}[t]
\begin{minipage}{.32\textwidth}
\captionsetup{skip=1pt}
\caption{Precision.}
\centering
\adjustbox{max width=1\textwidth}{
\begin{tabular}{c|c|c|c}
\hline
 \tabincell{c}{Methods} & Top-1 & Top-3 & Top-5  \\\hline
GNN (0) & 0.209 & 0.147 & 0.117 \\\hline
GNN (1) & \textit{0.285} & 0.166 & 0.131 \\\hline
GNN (2) & 0.218 & 0.138 & 0.101 \\\hline
GINN (0) & 0.326 & 0.174 & 0.138 \\\hline
GINN (1) & \textbf{0.351} & 0.167 & 0.130 \\\hline
GINN (2) & 0.305 & 0.195 & 0.136 \\\hline
\hline
Gain & \multicolumn{3}{|l}{\hspace{1pt}\textbf{0.351} - \textit{0.285} = 0.066} \\\hline
\end{tabular}}
\label{tab:pre}
\end{minipage}
\begin{minipage}{.32\textwidth}
\captionsetup{skip=1pt}
\caption{Recall.}
\centering
\adjustbox{max width=1\textwidth}{
\begin{tabular}{c|c|c|c}
\hline
 \tabincell{c}{Methods} & Top-1 & Top-3 & Top-5  \\\hline
GNN (0) & 0.243 & 0.373 & \textit{0.450} \\\hline
GNN (1) & 0.171 & 0.252 & 0.302 \\\hline
GNN (2) & 0.210 & 0.337 & 0.375 \\\hline
GINN (0) & 0.282 & 0.374 & 0.447 \\\hline
GINN (1) & 0.329 & 0.431 & \textbf{0.507} \\\hline
GINN (2) & 0.304 & 0.428 & 0.500 \\\hline
\hline
Gain & \multicolumn{3}{|l}{\hspace{1pt}\textbf{0.507} - \textit{0.450} = 0.057} \\\hline
\end{tabular}}
\label{tab:re}
\end{minipage}
\begin{minipage}{.32\textwidth}
\captionsetup{skip=1pt}
\caption{F1 score.}
\centering
\adjustbox{max width=1\textwidth}{
\begin{tabular}{c|c|c|c}
\hline
 \tabincell{c}{Methods} & Top-1 & Top-3 & Top-5  \\\hline
GNN (0) & \textit{0.224} & 0.211 & 0.186 \\\hline
GNN (1) & 0.214 & 0.201 & 0.183 \\\hline
GNN (2) & 0.214 & 0.196 & 0.159 \\\hline
GINN (0) & 0.303 & 0.238 & 0.211 \\\hline
GINN (1) & \textbf{0.339} & 0.241 & 0.207 \\\hline
GINN (2) & 0.304 & 0.268 & 0.214 \\\hline
\hline
Gain & \multicolumn{3}{|l}{\hspace{1pt}\textbf{0.339} - \textit{0.224} = 0.115} \\\hline
\end{tabular}}
\label{tab:f1}
\end{minipage}
\end{table*}
\setlength{\textfloatsep}{12pt}

\noindent
performance of either bug detector. The reason is, on one hand, more information will always be beneficial. On the other hand, exceedingly large graphs hinders the generalization of graph models, resulted in the degraded performance of the neural bug detectors. 

\vspace*{3pt}
\noindent
\textbf{\textit{Scalability.}}\,
Similar to the prior studies, we investigate the scalability of both neural bug detectors. Again, we have strengthened the baseline by adding extra edges to its program graphs~\cite{allamanis2017learning}. We fix the context for each method to be 2, which provides the largest variation in graph size. Figure~\ref{fig:scaNPE} depicts the precision, recall, and F1 score for GINN and the baseline under top-1 prediction mode (top-3 and top-5 yield similar results). Even though GINN was beaten by GNN on few subsets of small graphs, it consistently outperforms the baseline as the size of graphs increases. Overall, we conclude bug detectors built out of GINN are more scalable.

\begin{figure*}[h]
\centering
  \begin{tikzpicture}[scale=0.72]
  \begin{axis}[
    xmax=10,xmin=1,
    ymin=0,ymax=0.7,
    xlabel=\emph{N-th Set (Ranked by Size of Graphs)},ylabel=\emph{Top-1 Precision},
    xtick={1,2,3,...,10},
    ytick={0,0.1,0.20,...,0.7},
    xticklabels={1,2,...,10},
  ]
     \addplot+ [mark=pentagon*,mark size=2.5pt,every mark/.append style={}] coordinates{(1, 0.513+0.08) (2, 0.375+0.08) (3, 0.224+0.08) (4, 0.1925+0.08) (5, 0.194+0.08) (6, 0.103+0.08) (7, 0.102+0.08) (8, 0.078+0.08) (9, 0.039+0.08) (10, 0.014+0.08)};
     \addplot coordinates{(1, 0.352+0.11) (2, 0.217+0.11) (3, 0.281+0.11) (4, 0.234+0.11) (5, 0.227+0.11) (6, 0.189+0.11) (7, 0.202+0.11) (8, 0.171+0.11) (9, 0.091+0.11) (10, 0.040+0.11)};   
     \legend{\emph{GNN}, \emph{GINN}}
    \end{axis}
    \end{tikzpicture}%
    ~%
    \begin{tikzpicture}[scale=0.72]
  \begin{axis}[
    xmax=10,xmin=1,
    ymin=0,ymax=0.7,
    xlabel=\emph{N-th Set (Ranked by Size of Graphs)},ylabel=\emph{Top-1 Recall},
    xtick={1,2,...,10},
    ytick={0,0.1,0.20,...,0.7},
    xticklabels={1,2,...,10},
    ]
     \addplot+ [mark=pentagon*,mark size=2.5pt,every mark/.append style={}] coordinates{(1, 0.453-0.12) (2, 0.396-0.12) (3, 0.402-0.12) (4, 0.383-0.12) (5, 0.377-0.12) (6, 0.388-0.12) (7, 0.330-0.12) (8, 0.338-0.12) (9, 0.329-0.12) (10, 0.295-0.12)};
     \addplot coordinates{(1, 0.432-0.12) (2, 0.458-0.12) (3, 0.464-0.12) (4, 0.455-0.12) (5, 0.454-0.12) (6, 0.448-0.12) (7, 0.429-0.12) (8, 0.426-0.12) (9, 0.404-0.12) (10, 0.413-0.12)};    
    \legend{\emph{GNN},\emph{GINN}}
    \end{axis}
    \end{tikzpicture}%
    ~%
     \begin{tikzpicture}[scale=0.72]
  \begin{axis}[
    xmax=10,xmin=1,
    ymin=0,ymax=0.7,
    xlabel=\emph{N-th Set (Ranked by Size of Graphs)},ylabel=\emph{Top-1 F1 Score},
    xtick={1,2,...,10},
    ytick={0,0.1,0.20,...,0.70},
    xticklabels={1,2,...,10},
    ]
    
     
     \addplot+ [mark=pentagon*,mark size=2.5pt,every mark/.append style={}] coordinates{(1, 0.426) (2, 0.344) (3, 0.292) (4, 0.268) (5, 0.265) (6, 0.217) (7, 0.195) (8, 0.183) (9, 0.152) (10, 0.122)};     
     \addplot coordinates{(1, 0.372) (2, 0.332) (3, 0.366) (4, 0.339) (5, 0.335) (6, 0.313) (7, 0.310) (8, 0.293) (9, 0.235) (10, 0.199)};
     
     \legend{\emph{GNN},\emph{GINN}}
     \end{axis} 
     \end{tikzpicture}
     \setlength{\abovecaptionskip}{6pt}
     \setlength{\belowcaptionskip}{-2pt}
    \caption{Investigating the scalability of baseline and GINN.}
    \label{fig:scaNPE}
\end{figure*}
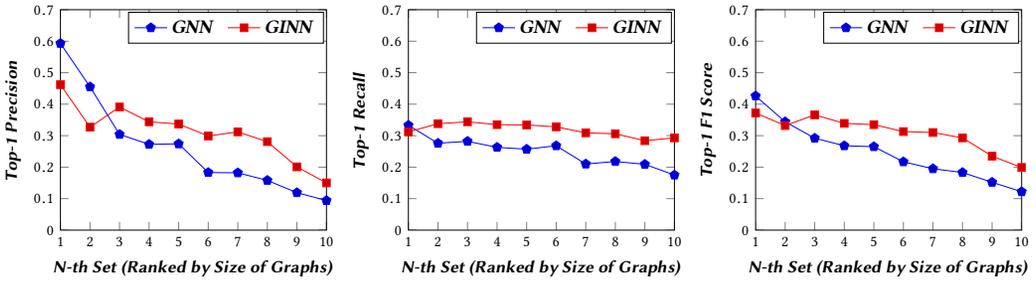

\vspace*{-1pt}
\noindent
\textbf{\textit{Catching Bugs in the Wild.}}\, We conduct another study to test if the GINN-based bug detector can discover new bugs in the real world. We set up both GINN-based bug detector (previously trained on 64 projects and configured with top-1 prediction mode) and Facebook Infer~\cite{calcagno2015moving,berdine2005smallfoot}, arguably the state-of-the-art static analysis tool for Java code, to scan the codebase of 20 projects on GitHub that are highly starred and actively maintained. Note that although Infer is sound by design (based on abstract interpretation~\cite{CousotCousot77-1}), it makes engineering compromises for the sake of usability, thus triggering both false positives and false negatives in practice. On the other hand, GINN-powered bug detector is by nature an unsound
tool, therefore we only study the utility of both tools in engineering terms. 

After manually inspecting all alarms produced by both tools, We confirmed 38 bugs out of 102 alarms GINN-based bug detector raises, which rounds up to a precision of 37.3\%. In comparison, Facebook Infer emitted 129 alarms out of which 34 are confirmed (\ie precision = 26.3\% ). To understand why our bug detector is more precise,
we make several important observations about

\noindent
the behavior of both tools through our manual inspection. All examples are included in Appendix~\ref{app:bugs}. 

First, Infer in general raised more alarms that are provably false (\eg Figure~\ref{fig:false1c},~\ref{fig:false2c}, and~\ref{fig:false3c}) and missed more bugs that are demonstrably real (\eg Figure~\ref{fig:true1} and~\ref{fig:true2}, both of which have been fixed after our reporting) than GINN-based bug detector. Second, Infer raised many alarms due to its confusion of some common APIs (\eg Figure~\ref{fig:API1c} and~\ref{fig:API2c}). Third, regarding path-sensitivity, there's still much room for infer to improve. For example, the path along which the bug is reported by Infer (Figure~\ref{fig:interc}) is clearly unsatisfiable. Specifically, the branch (Line~\ref{lines:if2}) that triggers the null pointer exception is guarded against the same condition under which the supposed null pointer is in fact instantiated properly (Line~\ref{lines:if1}). We have manually verified that no code in between changes the evaluation of the path condition at Line~\ref{lines:ifc2}. 
Unlike Facebook Infer, GINN-based bug detector searches for bug patterns it learned from the training data. Because a large number of the false alarms Infer reports clearly deviate 
from the norm exhibited in the training data, GINN-based bug detector is better at suppressing them, resulted in a higher precision. 

We also look into wrong predictions our bug detector made, and find that it tends to struggle when a bug pattern is not local --- the dereference is quite distant from where the null pointer is assigned. We conjecture that the cause is the signal networks receive gets weaker when a scattered bug pattern is overwhelmed by irrelevant statements. To address this issue, we apply the slicing technique~\cite{10.5555/800078.802557} and obtain encouraging results in a preliminary study. By simply slicing the graph of each tested method, the same model (without retraining) can now detect 39 bugs out of 93 alarms, a moderate improvement over the previous result. We plan to test this idea out more systematically and at a larger scale.



We reported the confirmed bugs (\ie 38 in total mentioned above) that GINN-based bug detector found, out of which 11 are fixed and another 12 are confirmed (fixes pending). Reasons for unconfirmed bugs are: (1) no response from developers; (2) requiring actual inputs that trigger the reported bugs; or (3) third party code developers did not write themselves. Our findings show that GINN-based bug detector is capable of catching new bugs in real-world Java projects.

\subsection{Discussion}
We summarize the lessons learned from our extensive evaluation. First and foremost, the interval-based graph abstraction method has shown to improve the generalization of various graph-based models (\eg standard GNN, GGNN, RNN Sandwich) across all three downstream PL tasks. In each task, the GINN-based model outperforms a GNN-based model --- on a dataset that the GNN-based model achieves state-of-the-art results --- by around 10\% in a metric adopted by the GNN-based model. Based on the collective experience of the community (Appendix~\ref{app:pub}), it is evident GINN has significantly advanced the state-of-the-art GNN as a general, powerful model in learning semantic program embeddings.
Second, our evaluation reveals the scalability issues existing graph models suffer from --- manifested in the significant performance drop against the increasing size of graphs --- and suggests an effective solution built out of the graph abstraction method. In fact, the overall accuracy improvement mentioned earlier is in large part attributed to GINN's efficient handling of the large graphs. Finally, our evaluation also shows model-based static analyzers can be a promising alternative to the classical static analysis tools for catching null pointer dereference bugs. Even though our results are still preliminary for drawing a general conclusion on the comparison between statistical-based and logic-based tools, it shows our neural bug detector is not only notably more precise than Infer but also capable of catching new bugs on many popular codebases on GitHub.

\section{Related Work}
\label{sec:rel}
In this section, we survey prior works on learning models of source code.~\citet{Hindle10} pioneer the field of machine learning for source code. In particular, Hindle~\etal find that programs that people write are mostly simple and rather repetitive, and thus they are amenable to statistical language models. Their finding is based on the evidence that the n-gram language model captures regularities in software as well as it does in English.


Later, many works propose to elevate the learning from the level of token sequences~\cite{Hindle10,Pu2016,AAAI1714603,Nguyen1145} to that of abstract syntax trees~\cite{maddison2014structured,Alon:2019:CLD:3302515.3290353,alon2018code2seq} in an attempt to capture the structure properties exhibited in source code. Notably,~\citet{Alon:2019:CLD:3302515.3290353} present a method for function name prediction. Specifically, it decomposes a program to a collection of paths in its abstract syntax tree, and learns the atomic representation of each path simultaneously with learning how to aggregate a set of them.

Nowadays, graph neural networks have become undoubtedly the most popular deep model of source code. Since the introduction of graph neural networks to the programming domain~\cite{li2015gated,allamanis2017learning}, they have been applied to a variety of PL tasks, such as program summarization~\cite{fernandes2018structured}, bug localization and fixing~\cite{Dinella2020HOPPITY}, and type inference~\cite{Wei2020LambdaNet}. Apart from being thoroughly studied and constantly improved in the machine learning field, GNN's capability of encoding semantic structure of code through a graph representation is a primary contributor to their success. While our work also builds upon GNN, we offer a fundamentally different perspective which no prior works have explored. That is program abstraction helps machine learning models to capture the essence of the semantics programs denote, and in turn facilitating the execution of downstream tasks in a precise and efficient manner. 

In parallel to all the aforementioned works, a separate line of works has emerged recently that use program executions (\ie dynamic models)~\cite{wang2017dynamic,wang2019learning,Wang101145} rather than source code (\ie static models) for learning program representations. Their argument is that source code alone may not be sufficient for models to capture the semantic program properties.~\citet{wang2019coset} show simple, natural transformations, albeit semantically-preserving, can heavily influence the predictions of models learned from source code. In contrast, executions that offer direct, precise, and canonicalized representations of the program behavior help models to generalize beyond syntactic features. On the flip side, dynamic models are likely to suffer from the low quality of training data since high-coverage executions are hard to obtain.~\citet{Wang101145} address this issue by blending both
source code and program executions.
Our work is set to improve static models via program abstraction, therefore we don't consider runtime information as a feature dimension, which can be an interesting future direction to explore.

\section{Conclusion}
\label{sec:con}

In this paper, we present a new methodology of learning models of source code. In particular, we argue by learning from abstractions of source code, models have an easier time to distill the key features for program representation, thus better serving the downstream tasks. At a technical level, we develop a principled interval-based abstraction method that directly applies to control flow graph. This graph abstraction method translates to a loop-based program abstraction at the source code level, which in essence makes models focus exclusively on looping construct for learning feature representations of source code. Through a comprehensive evaluation, we show our approach significantly outperforms the state-of-the-art models in two highly popular PL tasks: variable misuse and method name prediction. We also evaluate our approach in catching null pointer dereference bugs in Java programs. Results again show GINN-based bug detector not only beats the GNN-based bug detector but also yields a lower false positive ratio than Facebook Infer when deployed to catch null pointer deference bugs in the wild. We reported 38 bugs found by GINN to developers, among which 11 have been fixed and 12 have been confirmed (fixing pending). 

GINN is a general, powerful deep neural network that we believe is readily available to tackle a wide range of problems in program analysis, program comprehension, and developer productivity.

\bibliography{reference}
\clearpage
\appendix
   \begin{center}
      \huge\textbf{Appendix}
   \end{center}









\section{Loss Function}
\label{app:loss}

We present the loss function of the joint model below. First, we use the cross-entropy loss to specify the error between the predicted location and the true distribution, $Loc$, of the misuse variable.

\begin{equation*}
    H(\mathit{Loc},P_{\mathit{pro}}) = -\sum_{i\in [0,k)} \mathit{Loc}[i]\mathit{log}P_{\mathit{pro}}[i][0] = -\mathit{log}P_{\mathit{pro}}[i_{true}][0]
\end{equation*}
where $i_{true}$ is the actual index of the misuse variable in an input program. That is, the loss is the negative logarithm of $P_{\mathit{pro}}[i_{true}][0]$, the probability that the model assigns to the token at index $i_{true}$. As $P_{\mathit{pro}}[i_{true}][0]$ tends to 1, the loss approaches zero. The further $P_{\mathit{pro}}[i_{true}][0]$ goes below 1, the greater the loss becomes. Thus, minimizing this loss is equivalent to maximizing the log-likelihood that the model assigns to the true labels $i_{true}$. Similarly, the cross-entropy loss for repair variable is:
\begin{equation*}
    H(\mathit{Rep},P_{\mathit{pro}}) = -\sum_{i\in [0,k)} \mathit{Rep}[i]\mathit{log}P_{\mathit{pro}}[i][1] = -\mathit{log}P_{\mathit{pro}}[i_{true}][1]
\end{equation*}
The network will be trained to minimize both $H(\mathit{Loc},P_{\mathit{pro}})$ and $H(\mathit{Rep},P_{\mathit{pro}})$. For inference, the network predicts the token at $i_{\mathit{Loc}}$ (\resp $i_{\mathit{Rep}}$) to be the misuse (\resp repair) variable.
\begin{equation*}
\begin{split}
    & i_{\mathit{Loc}} = \argmax_{i\in [0,k)} P_{\mathit{pro}}[i][0]\\
    & i_{\mathit{Rep}} = \argmax_{i\in [0,k)} P_{\mathit{pro}}[i][1]
\end{split}
\end{equation*}

\clearpage

\section{Recurrent Neural Network}
\label{app:RNN}

A recurrent neural network (RNN)~\cite{rnnbook} is a class of artificial neural
networks that are distinguished from feedforward networks by their
feedback loops.  This allows RNNs to ingest their own outputs as
inputs. It is often said that RNNs have memory, enabling them to
process sequences of inputs.

Here we briefly describe the computation model of a vanilla RNN. Given
an input sequence, embedded into a sequence of vectors $x = (x_{1},
\cdot\cdot\cdot, x_{T_{x}})$, an RNN with $N$ inputs, a single hidden
layer with $M$ hidden units, and $Q$ output units. We define the RNN's
computation as follows:
\begin{align}
h_{t} &= f(W * x_{t} + V * h_{t-1}) \label{equ:1}\\
o_{t} &= \mathit{softmax}(Z * h_{t}) \notag
\end{align}
where $x_{t} \in \mathbb{R}^{N}$, $h_{t} \in \mathbb{R}^{M}$,
$o_{t}\in \mathbb{R}^{Q}$ is the RNN's input, hidden state and output
at time $t$, $f$ is a non-linear function (\eg tanh or sigmoid), $W
\in \mathbb{R}^{M*N}$ denotes the weight matrix for connections from
input layer to hidden layer, $V \in \mathbb{R}^{M*M}$ is the weight
matrix for the recursive connections (\ie from hidden state to itself)
and $Z \in \mathbb{R}^{Q*M}$ is the weight matrix from hidden to the
output layer.

\clearpage

\section{Partitioning the New Graphs into Intervals}
\label{app:parti}

We split our exposition into two steps: convert standard control flow graph to statement-based control flow graph and from statement-based control flow graph to token-based control flow graphs. Recall the definition of intervals: one node, called header, is the only entry node of a subgraph in which all closed paths contain the header node. In both steps, the intervals on the original control flow graph will be preserved because (1) there is no external node that connects to any statement or token node (2) the closed path is fundamentally the same as before except there will be more statement/token nodes on the path. Therefore, we say intervals consist of the same program statements regardless of the conversion from the standard to token-based control flow graph.

\clearpage

\section{Dataset Generation}
\label{app:dataset}

We used the ETH Py150 dataset~\cite{Raychev2984041}, which is based on GitHub Python code, and already partitioned into train and test splits (100K and 50K files, respectively). We further split the 100K train files into 90K train and 10K validation examples and applied a deduplication step on that dataset~\cite{Allamanis3359735}. We extracted all top-level function definitions from these files; any function that uses multiple variables can be turned into a training example by randomly replacing one variable usage with another. As there may be many candidates for such bugs in a function,~\citet{Hellendoorn2020Global} limited their extraction to up to three samples per function to avoid biasing the dataset too strongly towards longer functions. Since an important goal of our work is to improve the model scalability, we included more longer functions by creating 4 samples per function. To keep our dataset the same size as~\citet{Hellendoorn2020Global}'s (only for training and testing as they did not report the size of their validation set), we trimmed small functions from the other end. For every synthetically generated buggy example, an unperturbed, bug-free example of the function is included as well to keep our dataset balanced. Our dataset contains 2M programs for training, 245K for validation, and 755K for testing.

\clearpage

\section{Converting AST to GINN's graph representation}
\label{app:convert}

First, we walk each AST to distill all the statement nodes; then, for each statement node, we identify its immediate successor, such as the next statement within the same basic block, or the first statement of another basic block (due to control constructs like \texttt{if} statement or \texttt{for} loop).
Next, we add an edge from each statement node to its immediate successor. Finally, we replace statement nodes with sequences of token nodes, and move the
start and end node for each control flow edge from statement nodes to their first token nodes. 

Note that all programs used in~\cite{Hellendoorn2020Global} consist of single functions only, therefore we don't need to consider method calls when converting AST to CFG.

\clearpage

\section{Reproducing the Results Reported in Prior Works}
\label{app:rep}
We compare the performance of our own model implementations again those reported in~\cite{Hellendoorn2020Global}. In particular, we measure the independent localization and repair accuracy, the metrics adopted by~\cite{Hellendoorn2020Global}, of the core models instantiated by GGNN and RNN Sandwich. Table~\ref{tab:rep} shows the model performance is close across both neural architectures, signaling the validity of both our model implementations and data collection. 

Note that the accuracy below is what models achieved on the training set. Table~\ref{tab:res} in the paper shows the results of both GGNN and GGNN-powered RNN Sandwich on the test set, which is around 5\%--10\% lower than what's reported in~\cite{Hellendoorn2020Global}.\footnote{\citet{Hellendoorn2020Global} used classification accuracy and location+repair accuracy only for the test set.} We conjecture the accuracy drop is likely caused by the higher difficulty level of our testing set. As explained in Appendix~\ref{app:dataset}, our dataset contains more longer functions. Specifically, programs exceeding 250 tokens only make up 6.5\% of their test set~\cite{Hellendoorn2020Global} whereas programs of the same size take around 14\% in our test set. In addition, our experiment mainly focuses on the improvement of GINN over GGNN (or GNN). Since both models are implemented and executed with the exact same configurations (\ie server configuration, number of GPU cores, Tensorflow version, \etc), we believe our experimental results for variable misuse prediction task is legitimate.

\begin{table}[thb!]
\centering
\caption{}
\adjustbox{max width=\columnwidth}{
\begin{tabular}{c|c|c}
\hline
Models &  \tabincell{c}{Localization \\Accuracy} & \tabincell{c}{Repair \\Accuracy} \\\hline
GGNN (reported) & 79\% & 74\%    \\\hline
GGNN (self-implemented) & 77\% & 71\%    \\\hline 
\tabincell{c}{RNN Sandwich \\ (reported)} & 81\% & 86\% \\\hline
\tabincell{c}{RNN Sandwich \\ (self-implemented)} & 80\% & 84\% \\\hline
\end{tabular}
}
\label{tab:rep}
\end{table}

\clearpage

\section{Training Time}
\label{app:tt}

In this section, we report the training time for every evaluated model under each program analysis task. We ignore the inference time as all models make predictions instantaneously.

\vspace*{4pt}
\noindent
\textbf{\textit{Variable Misuse Prediction}}\, Figure~\ref{fig:ttvar} presents the training time of all evaluated models in variable misuse prediction task. We use the localization accuracy as other metrics show the same trend.

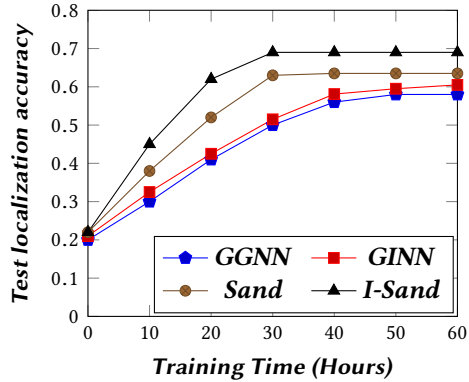
\begin{figure*}[h]
\centering
   \begin{tikzpicture}[scale=1]
   \begin{axis}[
    xmax=60,xmin=0,
       ymin=0,ymax=0.8,
    xlabel=\emph{Training Time (Hours)},ylabel=\emph{Test localization accuracy},
    xtick={0,10,20,...,60},
    ytick={0,0.1,0.2,0.3,0.4,...,0.8},
    xticklabel={\pgfmathparse{\tick}\pgfmathprintnumber{\pgfmathresult}},
    legend style={
        at={(0.99,0.01)},
        anchor=south east,
        font=\bfseries,
        /tikz/column 2/.style={
            column sep=5pt,
        },        
    },
   ]
   
     \addplot+ [mark=pentagon*,mark size=2.5pt,every mark/.append style={}] coordinates{(0, 0.3-.1)  (10, 0.5-0.2)  (20, 0.61-0.2) (30, 0.70-0.2)  (40, 0.76-0.2)  (50, 0.78-0.2)  (60, 0.78-0.2)};
     \addplot coordinates{(0, 0.35-.14)  (10, 0.51-.185)  (20, 0.59-.165) (30, 0.68-.165)  (40, 0.78-.199)  (50, 0.81-.215)  (60, 0.82-.215)};
     \addplot coordinates{(0, 0.32-.1)  (10, 0.58-0.2)  (20, 0.72-0.2) (30, 0.83-0.2)  (40, 0.835-0.2)  (50, 0.835-0.2)  (60, 0.835-0.2)};     
     \addplot+ [mark=triangle*,mark size=2.5pt] coordinates{(0, 0.32-.1)  (10, 0.65-.2)  (20, 0.82-.2) (30, 0.89-.2)  (40, 0.89-.2)  (50, 0.89-.2)  (60, 0.89-.2)};     

     \legend{\emph{GGNN}, \emph{GINN}, \emph{Sand}, \emph{I-Sand}}
    \end{axis}
    \end{tikzpicture}%
    \caption{Models' training time for variable misuse prediction.}
    \label{fig:ttvar}
\end{figure*}

\noindent
\textbf{\textit{Method Name Prediction}}\, Figure~\ref{fig:ttmn} presents the training time of all evaluated models in the method name prediction task.

\begin{figure*}[h]
\centering
   \begin{tikzpicture}[scale=1]
   \begin{axis}[
    xmax=120,xmin=0,
    ymin=0,ymax=0.8,
    xlabel=\emph{Training Time (Minutes)},ylabel=\emph{F1 achieved on test set},
    xtick={0,20,40,...,120},
    ytick={0,0.2,0.40,...,0.8},
    xticklabel={\pgfmathparse{\tick}\pgfmathprintnumber{\pgfmathresult}},
    ]
     \addplot+ [mark=pentagon*,mark size=2.5pt,every mark/.append style={}] coordinates{ (0, 0.193) (20, 0.31) (40, 0.4) (60, 0.45) (80, 0.49) (100, 0.51) (120, 0.51)};
     \addplot coordinates{ (0, 0.22) (20, 0.35) (40, 0.46) (60, 0.56) (80, 0.6) (100, 0.61) (120, 0.61)};   
    \legend{\emph{GGNN},\emph{GINN}}
    \end{axis}
    \end{tikzpicture}%
    \caption{Models' training time for method name prediction.}
    \label{fig:ttmn}
\end{figure*}
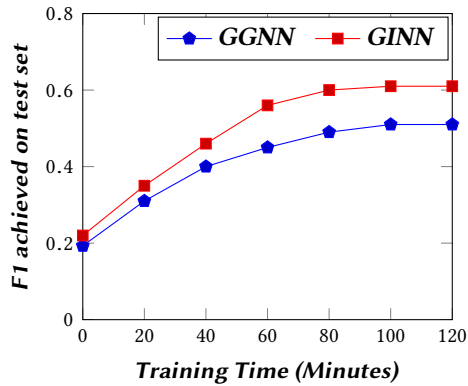

\noindent
\textbf{\textit{Null Pointer Dereference}}\, Figure~\ref{fig:ttnpe} presents the training time of all evaluated models in null pointer dereference detection task. Y-axis denotes precision score models achieved on the test set (with context (1)). Recall and F1 shows a similar trend. 

\begin{figure*}[h]
\centering
   \begin{tikzpicture}[scale=1]
   \begin{axis}[
    xmax=60,xmin=0,
    ymin=0,ymax=0.5,
    xlabel=\emph{Training Time (Minutes)},ylabel=\emph{Test Precision in Top-1},
    xtick={0,10,20,...,60},
    ytick={0,0.1,0.20,...,0.50},
    xticklabel={\pgfmathparse{\tick}\pgfmathprintnumber{\pgfmathresult}},
    ]
     \addplot+ [mark=pentagon*,mark size=2.5pt,every mark/.append style={}] coordinates{ (0, 0.119) (10, 0.145) (20, 0.191) (30, 0.223) (40, 0.255) (50, 0.278) (60, 0.279) (70, 0.278) (80, 0.278) (90, 0.319) (100, 0.319)};
     \addplot coordinates{(0, 0.123) (10, 0.20) (20, 0.265) (30, 0.315) (40, 0.351) (50, 0.350) (60, 0.350) (70, 0.409) (80, 0.412) (90, 0.410) (100, 0.411)};   
     \legend{\emph{GNN},\emph{GINN}}
     \end{axis} 
     \end{tikzpicture}%
    \caption{Models' training time for null pointer dereference detection.}
    \label{fig:ttnpe}
\end{figure*}
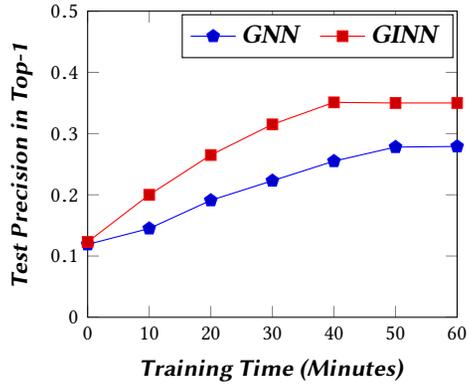

\clearpage

\section{Addition edges proposed by Allamanis \etal}
\label{app:edges}

Below we list the edges~\citet{allamanis2017learning} designed on top of AST.

\begin{itemize}
    \item \textbf{NextToken} connects each terminal node (syntax token) to its successor.
    \item \textbf{LastRead} connects a terminal node of a variable to all elements of the set of terminal nodes at which the variable could have been read last.
    \item \textbf{LastWrite}: connects a terminal node of a variable to all elements of the set of syntax tokens at which the variable was could have been last written to.
    \item \textbf{ComputedFrom} connects a terminal node of a variable $v$ to all variable tokens occurring in $expr$ when $expr$ is assigned to $v$. 
    \item \textbf{LastLexicalUse} chains all uses of the same variable.
    \item \textbf{ReturnsTo} connects \texttt{return} tokens to the method declaration.
    \item \textbf{FormalArgName} connects arguments in method calls to the formal parameters that they are matched to.
    \item \textbf{GuardedBy} connects every token corresponding to a variable (in the true branch of a \texttt{if} statement) to the enclosing guard expression that uses the variable.
    \item \textbf{GuardedByNegation} connects every token corresponding to a variable (in the false branch of a \texttt{if} statement) to the enclosing guard expression that uses the variable.    
\end{itemize}

\clearpage

\section{Metrics used in Method name prediction}
\label{app:mnmetrics}

We adopted the measure used by previous works~\cite{alon2018code2seq,Alon:2019:CLD:3302515.3290353,Wang101145,fernandes2018structured} in method name prediction, which measured F1 score (\ie the harmonic mean of precision and recall) over subtokens, case-insensitive. The intuition is the quality of a method name prediction largely depends on the constituent sub-words. For example, for a method called \texttt{computeDiff}, a prediction of \texttt{diffCompute} is considered as an exact match, a prediction of \texttt{compute} has a full precision but low recall, and a prediction of \texttt{computeFileDiff} has a full recall but low precision. 

\textit{Below is extracted from \url{https://en.wikipedia.org/wiki/ROUGE_(metric)}.}

ROUGE, which stands for Recall-Oriented Understudy for Gisting Evaluation, is a set of metrics for evaluating automatic summarization and machine translation in natural language processing. The metrics compare an automatically produced summary or translation against a reference or a set of references (human-produced) summary or translation.

\begin{itemize}
    \item ROUGE-1 refers to the overlap of unigram (each word) between the system and reference summaries.
    \item ROUGE-2 refers to the overlap of bigrams between the system and reference summaries.
    \item ROUGE-L: Longest Common Subsequence based statistics. Longest common subsequence problem takes into account sentence level structure similarity naturally and identifies longest co-occurring in sequence n-grams automatically.
\end{itemize}

\clearpage

\section{Dataset for NPE detection task}
\label{app:npedata}

We give the details about the dataset we assembled for the evaluation of detecting null pointer dereference.

\begin{table*}[h]
\caption{}
\adjustbox{max width=.8\paperwidth}{
\centering
\small
\begin{tabular}{c|c|c|c|c}
\hline
 Dataset & Projects & Description & Size (KLoC) & Number of Buggy Methods \\ \hline
\multirow{13}{*}{Test} &Lang & Java lang library &50 & 7   \\\cline{2-5}
&Closure & A JavaScript checker and optimizer &260 & 18  \\\cline{2-5}
&Chart & Java chart library &149 & 17  \\\cline{2-5}
&Mockito & Mocking framework for unit tests &45 & 14  \\\cline{2-5}
&Math & Mathematics and statistics components &165 & 16   \\\cline{2-5}
&Accumulo & Key/value store &194 & 6   \\\cline{2-5}
&Camel & Enterprise integration framework &560 & 17   \\\cline{2-5}
&Flink & System for data analytics in clusters &258 & 13   \\\cline{2-5}
&Jackrabbit-oak & hierarchical content repository &337 & 23  \\\cline{2-5}
&Log4j2 & Logging library for Java &70 & 29   \\\cline{2-5}
&Maven & Project management and comprehension tool & 62 & 6   \\\cline{2-5}
&Wicket & Web application framework &206 & 7   \\\cline{2-5}
&Birt & Data visualizations platform &1,093 & 678   \\\hline
\hline
\multicolumn{4}{r|}{Number of Buggy Methods in Total for Test}& 793  \\ 
\multicolumn{4}{r|}{Number of Methods in Total for Test}& 3,000      \\\hline 
\hline
\multirow{2}{*}{Validation} &SWT & Eclipse Platform project repository & 460 & 276 \\\cline{2-5}
&Tomcat & Web server and servlet container &222 & 111   \\\cline{2-5}
\hline\hline
\multicolumn{4}{r|}{Number of Buggy Methods in Total for Validation}& 387  \\ 
\multicolumn{4}{r|}{Number of Methods in Total for Validation}& 1,000      \\\hline 
\hline
\multirow{4}{*}{Training} &JDT UI & User interface for the Java IDE & 508 & 897  \\\cline{2-5}
&Platform UI & User interface and help components of Eclipse & 595 & 920  \\\cline{2-5}
&AspectJ & An aspect-oriented programming extension &289 & 151  \\\cline{2-5}
&from BugSwarm & 61 projects on GitHub & 5,191 & 156  \\\hline
\hline
\multicolumn{4}{r|}{Number of Buggy Methods in Total for Training}& 2,124   \\
\multicolumn{4}{r|}{Number of Methods in Total for Training}& 9,000   \\\hline
\end{tabular}
}
\label{tab:stas}
\end{table*}

\clearpage

\section{Precision, recall and F1 score}
\label{app:metrics}

Below we explain precision, recall and F1 score for bug detection.

\begin{align}
\mathit{Precision} &=  \frac{\mathit{TP}}{\mathit{TP} + \mathit{FP}} \label{equ:pre}\\ 
\mathit{Recall} &=  \frac{\mathit{TP}}{\mathit{TP} + \mathit{FN}} \\
\mathit{F1\,\, Score} &=  \frac{2 * \mathit{Precision} * \mathit{Recall}}{\mathit{Precision} + \mathit{Recall}} \label{equ:f1}
\end{align}

TP, FP, and FN denote true positive, false positive, and false negative respectively.
True positive is the number of the predicted buggy lines that are truly buggy, while false
positive is the number of predicted buggy line that are not buggy. False negative records
the number of buggy lines that are predicted as non-buggy. A higher precision suggests
a relatively low number of false alarms while high recall indicates a relatively low number of missed
bugs. F1 takes both precision and recall into consideration.

\clearpage

\section{Manual Investigation of the Bug reports}
\label{app:bugs}

Below is the program extracted from \url{https://github.com/crossoverJie/JCSprout/blob/master/src/main/java/com/crossoverjie/algorithm/TwoSum.java}.
\begin{figure}[h]
\begin{minipage}{0.9\textwidth}
  \begin{center}
	\lstset{style=mystyle}
	\lstinputlisting[linewidth=6.5cm,numbers=left,firstnumber=48,framexleftmargin=7pt]{code/false1.java}
  \end{center}
\end{minipage}
\caption{}
\label{fig:false1c}
\end{figure}

\noindent
Below is the report produced by Facebook Infer for the program in Figure~\ref{fig:false1c}. The alarm is proven to be false.
\begin{figure}[htb!]
\centering
  \includegraphics[width=.80\linewidth]{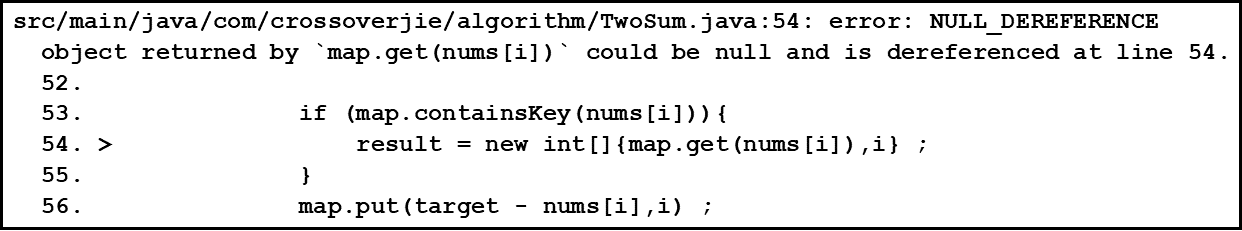}
  \caption{}
  \label{fig:false1}
\end{figure}

\noindent
Below is the program extracted from \url{https://github.com/alibaba/Sentinel/blob/master/sentinel-core/src/main/java/com/alibaba/csp/sentinel/node/metric/MetricTimerListener.java}.
\begin{figure}[h]
\begin{minipage}{0.9\textwidth}
	\begin{center}
	\lstset{style=mystyle}
	\lstinputlisting[linewidth=13.6cm,numbers=left,firstnumber=59,framexleftmargin=7pt]{code/false2.java}
  \end{center}
\end{minipage}
\caption{}
\label{fig:false2c}
\end{figure}

\noindent
Figure~\ref{fig:false2} shows the report produced by Facebook Infer for the program in Figure~\ref{fig:false2c}. This alarm is also proven to be false.
\begin{figure}[htb!]
\centering
  \includegraphics[width=.9\linewidth]{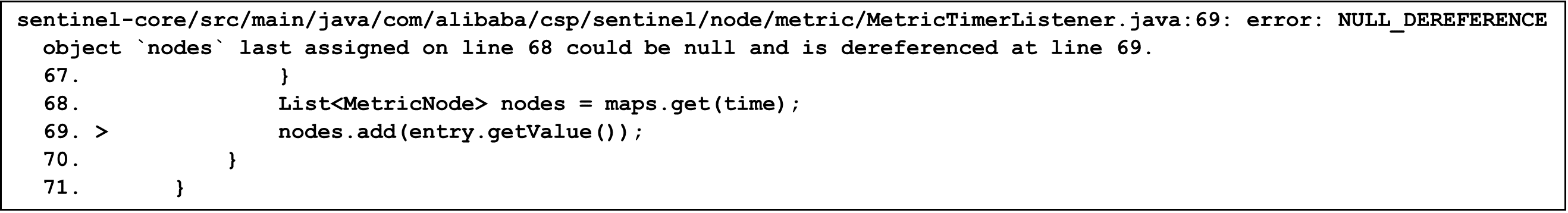}
  \caption{}
  \label{fig:false2}
\end{figure}

\noindent
Figure~\ref{fig:false3c} shows the program extracted from \url{https://github.com/stagemonitor/stagemonitor/blob/0.89.0/stagemonitor-tracing/src/main/java/org/stagemonitor/tracing/profiler/CallStackElement.java}.
\begin{figure}[h]
\begin{minipage}{0.9\textwidth}
	\begin{center}
	\lstset{style=mystyle}
	\lstinputlisting[linewidth=11.2cm,numbers=left,firstnumber=75,framexleftmargin=9pt]{code/iter.java}
  \end{center}
\end{minipage}
\caption{}
\label{fig:false3c}
\end{figure}

\noindent
Figure~\ref{fig:false3} is the report produced by Facebook Infer for the program in Figure~\ref{fig:false3c}. The reason Infer think a npe is thrown by \texttt{child.isIOQuery} on line 92 is because inside of \texttt{child.recycle()}, called at line 94, \texttt{signature} will be assigned as null (line 80). Then in the next iteration when \texttt{child.isIOQuery} is invoked \texttt{charAt} function will be called on a null pointer at line 103. However, this report is provably false because the statement at line 91 guarantees \texttt{child} object refers to different memory cells in an array during each iteration, meaning \texttt{recyle} function will never be called on the same object as \texttt{isQuery}. 
\begin{figure}[htb!]
\centering
  \includegraphics[width=.80\linewidth]{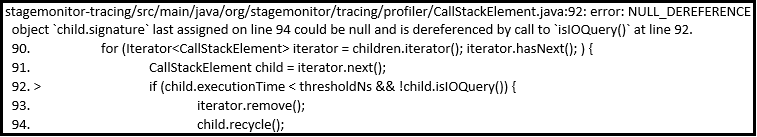}
  \caption{}
  \label{fig:false3}
\end{figure}

\noindent
Figure~\ref{fig:true1} shows the program extracted from \url{https://github.com/winder/Universal-G-Code-Sender/blob/54a1db413980d4c90b5b704c2723f0e740de20be/ugs-core/src/com/willwinder/universalgcodesender/i18n/Localization.java}. When \texttt{region} is null at line 7 inside of function \texttt{getString}, a null pointer exception will be thrown in function initialize at line 2. This bug, which has been fixed after our reporting, is missed by Facebook Infer but caught by GINN-based bug detector.
\begin{figure}[h]
\begin{minipage}{0.9\textwidth}
	\begin{center}
	\lstset{style=mystyle}
	\lstinputlisting[linewidth=8.1cm,numbers=left,firstnumber=1,framexleftmargin=7pt]{code/true1.java}
  \end{center}
\end{minipage}
\caption{}
\label{fig:true1}
\end{figure}

\noindent
Figure~\ref{fig:true2} is the program extracted from \url{https://github.com/brianfrankcooper/YCSB/blob/cd1589ce6f5abf96e17aa8ab80c78a4348fdf29a/jdbc/src/main/java/site/ycsb/db/JdbcDBClient.java}. If \texttt{driver} is checked against null at line 9, it is reasonable to add a check before calling \texttt{driver.contains} at line 3. Like the previous bug, this one is also missed by Facebook Infer but caught by GINN-based bug detector, in addition, it is fixed after we report the bug.

\begin{figure}[h]
\begin{minipage}{0.9\textwidth}
	\begin{center}
	\lstset{style=mystyle}
	\lstinputlisting[linewidth=6.1cm,numbers=left,firstnumber=1,framexleftmargin=7pt]{code/true2.java}
  \end{center}
\end{minipage}
\caption{}
\label{fig:true2}
\end{figure}

\noindent
Figure~\ref{fig:API1c} is the program extracted from \url{https://github.com/ctripcorp/apollo/blob/master/apollo-core/src/main/java/com/ctrip/framework/foundation/internals/Utils.java}. This alarm is provably false, and the reason seems that Facebook Infer is confused by the API \texttt{Strings.nullToEmpty} in \texttt{com.google.common.base.Strings}.
\begin{figure}[h]
\begin{minipage}{0.9\textwidth}
	\begin{center}
	\lstset{style=mystyle}
	\lstinputlisting[linewidth=6.6cm,numbers=left,firstnumber=5,framexleftmargin=7pt]{code/API1.java}
  \end{center}
\end{minipage}
\caption{}
\label{fig:API1c}
\end{figure}

\noindent
Figure~\ref{fig:API1} shows the report produced by Facebook Infer for the program in Figure~\ref{fig:API1c}. This is a false alarm.
\begin{figure}[htb!]
\centering
  \includegraphics[width=.9\linewidth]{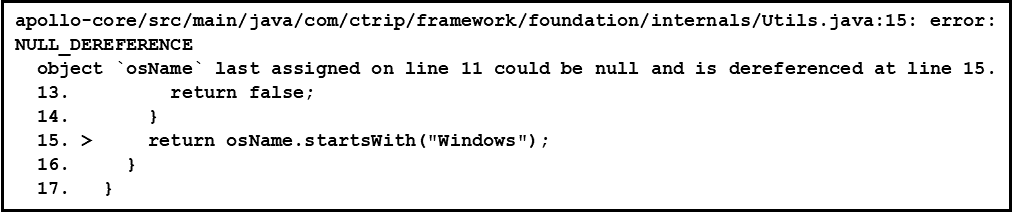}
  \caption{}
  \label{fig:API1}
\end{figure} 

\noindent
Figure~\ref{fig:API2c} is the program extracted from \url{https://github.com/brianfrankcooper/YCSB/blob/master/s3/src/main/java/site/ycsb/db/S3Client.java}. This alarm is provably false, and the reason seems that Facebook Infer is confused by the API \texttt{propsCL.getProperty("s3.protocol", "HTTPS");} (a method in \texttt{java.util.Properties}) at line 210, which assigns a default value \texttt{HTTPS} to \texttt{protocol}. 
\begin{figure}[h]
\begin{minipage}{0.9\textwidth}
	\begin{center}
	\lstset{style=mystyle}
	\lstinputlisting[linewidth=7.9cm,numbers=left,firstnumber=208,framexleftmargin=10pt]{code/API2.java}
  \end{center}
\end{minipage}
\caption{}
\label{fig:API2c}
\end{figure}

\noindent
Figure~\ref{fig:API2} shows the report produced by Facebook Infer for the program in Figure~\ref{fig:API2c}. This is another false alarm.
\begin{figure}[htb!]
\centering
  \includegraphics[width=.9\linewidth]{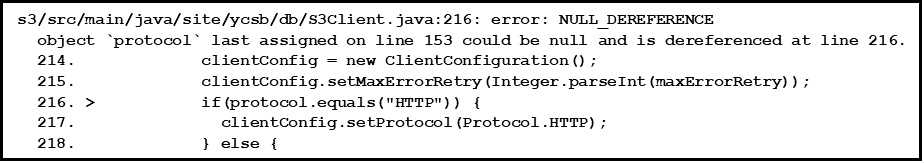}
  \caption{}
  \label{fig:API2}
\end{figure}

\noindent
Figure~\ref{fig:interc} is the program extracted from \url{https://github.com/stagemonitor/stagemonitor/blob/0.89.0/stagemonitor-web-servlet/src/main/java/org/stagemonitor/web/servlet/filter/HttpRequestMonitorFilter.java}. The path on which Infer reports a NPE is infeasible. \texttt{isInjectContentToHtml(request)} if evaluated to be \texttt{true}, \texttt{httpServletResponseBufferWrapper} was instantiated properly under the same condition. We have manually verified that no code in between the two \texttt{isInjectContentToHtml(request)} (line 99 and 111) affects the output of \texttt{isInjectContentToHtml(request)}. We have attached a detailed call chain for \texttt{monitorRequest} in the supplemental material. The remaining part of the code is straight-forward, and can be easily checked.

\begin{figure}[h]
\begin{minipage}{0.9\textwidth}
	\begin{center}
	\lstset{style=mystyle}
	\lstinputlisting[linewidth=13.5cm,numbers=left,firstnumber=95,framexleftmargin=10pt]{code/inter.java}
  \end{center}
\end{minipage}
\caption{}
\label{fig:interc}
\end{figure}

\noindent
Figure~\ref{fig:inter} shows the report produced by Facebook Infer for the program in Figure~\ref{fig:interc}. This is a false alarm.
\begin{figure}[htb!]
\centering
  \includegraphics[width=.8\linewidth]{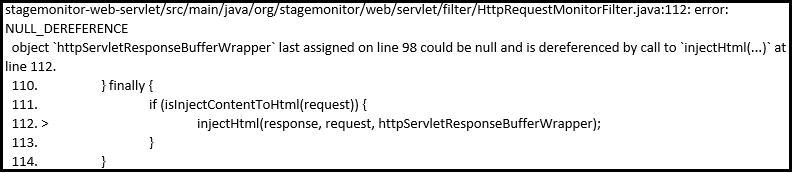}
  \caption{}
  \label{fig:inter}
\end{figure}

\noindent
List of fixed and confirmed bugs:

\noindent
\url{https://github.com/ctripcorp/apollo/issues/3012} \\
\url{https://github.com/spring-projects/spring-boot/issues/20890}\\
\url{https://github.com/SynBioDex/libSBOLj/issues/601}\\
\url{https://github.com/SynBioDex/libSBOLj/issues/603}\\
\url{https://github.com/konsoletyper/teavm/issues/438}\\
\url{https://github.com/brianfrankcooper/YCSB/issues/1371}\\
\url{https://github.com/ome/bioformats/issues/3464}\\
\url{https://github.com/nutzam/nutz/issues/1532}\\
\url{https://github.com/winder/Universal-G-Code-Sender/issues/1304}\\
\url{https://github.com/SeleniumHQ/selenium/issues/8183}\\
\url{https://github.com/SeleniumHQ/selenium/issues/8184}\\
\url{https://github.com/SeleniumHQ/selenium/issues/8185}\\
\url{https://github.com/kdn251/interviews/pull/159}\\
\url{https://github.com/spring-projects/spring-boot/issues/20901}\\
\url{https://github.com/crossoverJie/JCSprout/issues/195}\\

\clearpage

\section{publication history}
\label{app:pub}

\textit{\textbf{Variable Misuse Prediction}} GGNN~\cite{allamanis2017learning} is the first notable model, which achieves 74.1\% accuracy on a subset of MSR-VarMisuse. As depicted in Table~\ref{tab:111}, on the exact same dataset using~\citet{allamanis2017learning}'s metric, the joint model proposed by~\citet{vasic2018neural} is in fact notably worse than GGNN. The joint model performed better than GGNN on a new dataset with a set of new metrics.

\begin{table*}[h]
\caption{}
\adjustbox{max width=.8\paperwidth}{
\centering
\small
\begin{tabular}{lcccc}
\hline
\multirow{2}{*}{Model}   & MSR-VarMisuse  & \multicolumn{3}{c}{ETH-Py150} \\\cline{2-5}
                         & \tabincell{c}{end-to-end \\Accuracy} & \tabincell{c}{Classification \\Accuracy} & \tabincell{c}{Localization \\Accuracy} & \tabincell{c}{Localization+Repair \\Accuracy} \\\hline
GGNN                     & 71.4\%         & 71.1\%   & 64.6\%   & 55.8\% \\
Joint Model              & 62.3\%         & 82.4\%   & 71\%     & 65.7\% \\\hline\hline
\textbf{Improvement}     & \textbf{-9.1\%}& \textbf{11.3\%}   & \textbf{6.4\%}    & \textbf{9.9\%}  \\\hline
\end{tabular}
}
\label{tab:111}
\end{table*}

Given the conceptual framework laid out by the joint model,~\citet{Hellendoorn2020Global} proposed an engineering upgrade by simply replacing the RNN with a Transformer~\cite{vaswani2017attention} as the new instantiation of the core model. Table~\ref{tab:222} shows how RNN Sandwich (the state-of-the-art) compared to the engineering upgrade of the model proposed by~\citet{vasic2018neural}. Note that~\citet{Hellendoorn2020Global} curated a different dataset than~\citet{vasic2018neural}'s from ETH-Py150. $\leq$250 (\resp 1000) represents the set of programs that consist of less than 250 (\resp 1000) tokens.

\begin{table*}[h]
\caption{}
\adjustbox{max width=1.8\paperwidth}{
\centering
\small
\begin{tabular}{lcccc}
\hline
\multirow{2}{*}{Model}   & \multicolumn{2}{c}{Classification Accuracy}  & \multicolumn{2}{c}{Location+Repair Accuracy} \\\cline{2-5}
                         & $\leq$ 250 & $\leq$ 1000 & $\leq$ 250 & $\leq$ 1000 \\\hline
Transformer               & 75.9\%         & 73.2\%   & 67.7\%   & 63.0\% \\
RNN Sandwich              & 82.5\%         & 81.9\%   & 75.8\%   & 73.8\% \\\hline\hline
\textbf{Improvement}      &  \textbf{6.6\%}&  \textbf{8.7\%}   &  \textbf{8.1\%}   & \textbf{10.8\% } \\\hline
\end{tabular}
}
\label{tab:222}
\end{table*}

\noindent
\textit{\textbf{Method Name Prediction}} code2seq~\cite{alon2018code2seq} set the bar for large-scale, cross-project method name prediction. Table~\ref{tab:333} shows how it compares against a simple 2-layer bidirectional LSTM on code2seq's datasets: Java-small, Java-med, and Java-large.

\begin{table*}[h]
\caption{}
\adjustbox{max width=.8\paperwidth}{
\centering
\small
\begin{tabular}{lccccccccc}
\hline
\multirow{2}{*}{Model}   & \multicolumn{3}{c}{Java-small}  & \multicolumn{3}{c}{Java-med} & \multicolumn{3}{c}{Java-large} \\\cline{2-10}
                         & Precision & Recall  & F1      & Precision & Recall  & F1      & Precision & Recall  & F1 \\\hline
2-layer bi-LSTM          & 42.63\%   & 29.97\% & 35.20\% & 55.15\%   & 41.75\% & 47.52\% &63.53\%    & 48.77\% & 55.18\% \\
code2seq                 & 50.64\%   & 37.40\% & 43.02\% & 61.24\%   & 47.07\% & 53.23\% &64.03\%    & 55.02\% & 59.19\% \\\hline\hline
\textbf{Improvement}     & \textbf{8.01\%} & \textbf{7.43\%}   & \textbf{7.82\%}    & \textbf{6.09\%} & \textbf{5.32\%} & \textbf{5.71\%} & \textbf{0.50\%} & \textbf{6.25\%} & \textbf{4.01\%} \\\hline
\end{tabular}
}
\label{tab:333}
\end{table*}

Table~\ref{tab:444} shows how Sequence GNN~\cite{fernandes2018structured} compares against code2seq on Java-small. 

\citet{Wang101145} proposed a blended model, called Liger, that combines both static and dynamic program features for predicting method names. Table~\ref{tab:555} shows how Liger compares with code2seq on a subset of Java-med and Java-large.

In conclusion, the improvement the GINN-based model made over the GNN-based model in each PL task presented in this paper is comparable to that each aforementioned model made over the prior state-of-the-art. More importantly, GINN-based models only work with the datasets on which GNN-based models achieved state-of-the-art results and they are measured using the metrics proposed by the GNN-based models.

\begin{table*}[t]
\caption{}
\adjustbox{max width=.8\paperwidth}{
\centering
\small
\begin{tabular}{lccc}
\hline
\multirow{2}{*}{Model}   & \multicolumn{3}{c}{Java-small}  \\\cline{2-4}
                         & F1        & ROGUE-2  & ROGUE-L  \\\hline
code2seq                 & 43.0\%    & -        & -        \\
Sequence GNN             & 51.4\%    & 25.0     & 50.0     \\\hline\hline
\textbf{Improvement}     & \textbf{8.4\%} & -   & -        \\\hline
\end{tabular}
}
\label{tab:444}
\end{table*}

\begin{table*}[t]
\caption{}
\adjustbox{max width=.8\paperwidth}{
\centering
\small
\begin{tabular}{lcccccc}
\hline
\multirow{2}{*}{Model}   & \multicolumn{3}{c}{Java-med} & \multicolumn{3}{c}{Java-large}  \\\cline{2-7}
                         & Precision & Recall  & F1      & Precision & Recall  & F1       \\\hline
code2seq                 & 32.95\%   & 20.23\% & 25.07\% & 36.49\%   & 22.51\% & 27.84\%  \\
Liger                    & 39.88\%   & 27.14\% & 32.30\% & 43.28\%   & 31.43\% & 36.42\%  \\\hline\hline
\textbf{Improvement}     & \textbf{6.93\%} & \textbf{6.91\%}   & \textbf{7.23\%}    & \textbf{6.79\%} & \textbf{8.92\%} & \textbf{8.58\%} \\\hline
\end{tabular}
}
\label{tab:555}
\end{table*}

\end{document}